\newif\ifconf
\newif\ifcomm
\newif\ifblind
\newif\ifacm
\newif\ifacmart
\newif\ifs
    \newcommand{\Conf}[1]{#1}
    \newcommand{\TR}[1]{}
    \newcommand{\Journal}[1]{}  %enable for journal only
    \newcommand{\OnlyTR}[1]{}   %enable for TR only, disable for journal
    \newcommand{\Conf}[1]{}
    \newcommand{\TR}[1]{#1}
    \newcommand{\Journal}[1]{}  %enable for journal only
    \newcommand{\OnlyTR}[1]{#1}   %enable for TR only, disable for journal
	\newcommand{\ra}[1]{\renewcommand{\arraystretch}{#1}}
\newcolumntype{C}[1]{>{\centering\let\newline\\\arraybackslash\hspace{0pt}}m{#1}}
\definecolor{darkred}{rgb}{0.7,0,0}
\definecolor{darkgreen}{rgb}{0,0.5,0}
    \newtheoremstyle{boldthm}{}{}{\itshape}{}{\bfseries}{.}{ }{\thmname{#1}\thmnumber{ #2}\thmnote{ (#3)}}
	\theoremstyle{boldthm}
  \newtheorem{theorem}{Theorem}%[section]    -> example: \begin{theorem}   ... \end{theorem}
  \newtheorem{definition}{Definition}%[section] -> example: \begin{definition} [$B_{h}$ Sequence] \label{def:B} ..... \end{definition}
    \newcounter{commentNumberI}
     \newcommand{\IK}[1]{\addtocounter{commentNumberI}{1}{{({\color{blue} 							{(\arabic{commentNumberI}.)} Isaac: #1})}}} %just color
     \newcommand{\FIK}[1]{\footnote{\IK{#1}}}
     \newcommand{\Gal}[1]{\addtocounter{commentNumberI}{1}{{({\color{red} {(\arabic{commentNumberI}.)} Gal: #1})}}} %just color
     \newcommand{\GM}[1]{\Gal{#1}}
     \newcommand{\FGM}[1]{\footnote{\GM{#1}}}
      \newcommand{\Dean}[1]{\addtocounter{commentNumberI}{1}{{({\color{purple} {(\arabic{commentNumberI}.)} Dean: #1})}}} %just color
      \newcommand{\DL}[1]{\Dean{#1}}
      \newcommand{\DLA}[1]{\Dean{add:#1}}
      \newcommand{\dl}[1]{\footnote{\DL{#1}}}
       \newcommand{\Ariel}[1]{\addtocounter{commentNumberI}{1}{{({\color{green} {(\arabic{commentNumberI}.)} Ariel: #1})}}} %just color
       \newcommand{\FA}[1]{\footnote{\Ariel{#1}}}
	  \newcommand{\SV}[1]{\addtocounter{commentNumberI}{1}{{({\color{darkgreen} {(\arabic{commentNumberI}.)} Shay: #1})}}} %just color
	  \newcommand{\FSV}[1]{\footnote{\SV{#1}}}
       \newcommand{\Kathy}[1]{\addtocounter{commentNumberI}{1}{{({\color{green} {(\arabic{commentNumberI}.)} Kathy: #1})}}} %just color
       \newcommand{\KB}[1]{\Kathy{#1}}
       \newcommand{\FKB}[1]{\footnote{\KB{#1}}}
    \newcommand{\A}[1]{\textbf{[#1]}}       %Comments -> in bold
    \newcommand{\F}[1]{\footnote {\color{red} [#1]}}
    \newcommand{\del}[1]{\textcolor{red}{X: \textbf{\sout{#1}}}} %be careful, this can leave double spaces in the document when in comments
    \newcommand{\A}[1]{}
    \newcommand{\F}[1]{}
    \newcommand{\IK}[1]{}
    \newcommand{\FIK}[1]{}
    \newcommand{\SV}[1]{}
    \newcommand{\FSV}[1]{}
    \newcommand{\Gal}[1]{}
    \newcommand{\GM}[1]{}
	\newcommand{\FGM}[1]{}
    \newcommand{\Dean}[1]{}
    \newcommand{\DL}[1]{}
    \newcommand{\DLA}[1]{}
    \newcommand{\dl}[1]{}
    \newcommand{\Ariel}[1]{}
    \newcommand{\FA}[1]{}
    \newcommand{\KB}[1]{}
    \newcommand{\FKB}[1]{}
    \newcommand{\del}[1]{} %be careful, this can leave double spaces in the document when in comments
\newcommand{\remove}[1]{}
    \def\compactify{\itemsep=0pt \topsep=0pt \partopsep=0pt \parsep=0pt}
      \let\latexusecounter=\usecounter
\newcommand{\sep}{\hspace*{.1em}}
\newcommand{\eq}[1]{(\ref{#1})}
\newcommand{\se}[1]{Sec.~\ref{#1}}
\newcommand{\fig}[1]{Fig.~\ref{#1}}
\newcommand{\al}[1]{Alg.~\ref{#1}}
  \newcommand{\bp}{\begin{proof}}
  \newcommand{\bpo}{ \begin{proof}[Proof Outline] }
  \newcommand{\ep}{\end{proof}}       %\newcommand{\ep}{\QED \end{proof}}
  \newcommand{\bp}{\begin{IEEEproof}}     %IEEE formats: \begin{IEEEproof} ; \begin{IEEEproof}[Proof of Theorem \ref{thm:my}]; else: \begin{proof}
  \newcommand{\bpo}{ \begin{IEEEproof}[Proof Outline] }
  \newcommand{\ep}{\end{IEEEproof}}       %\newcommand{\ep}{\QED \end{proof}}
\newcommand{\be}{\begin{equation}}
\newcommand{\ee}{\end{equation}}
\newcommand\MyIncludeGraphics[2][]{% needs packages: \usepackage{graphicx}\usepackage{todonotes}
    \IfFileExists{#2}{%
        \includegraphics[#1]{#2}%
    }{%
        \missingfigure[figwidth=7.0cm]{Missing #2}%
    }%
}%
\newcommand{\T}[1]{\smallskip\noindent\textbf{#1}} %paragraph title with smallskip
\newcommand{\U}[1]{\smallskip\noindent\emph{#1}} %paragraph sub-title with smallskip
\newcommand{\para}[1]{\left( #1 \right)}        %Shortcuts in equations for parentheses
\newcommand{\brac}[1]{\left\{ #1 \right\}}
\newcommand{\set}[1]{\left\{#1\right\}}         %same as \brac
\newcommand{\name}{Anchor\-Hash\xspace}
\newcommand{\newVar}[2]{\newcommand{#1}{\ensuremath{#2}\xspace}}
  \newVar{\server}{S}
  \newVar{\client}{C}
  \newVar{\rclient}{R_c}
  \newVar{\rserver}{R_s}
\providecommand{\ie}{\emph{i.e.,} }
\providecommand{\eg}{\emph{e.g.,} }
\newcommand{\vx}{\checkmark\kern-1.1ex\raisebox{.7ex}{\rotatebox[origin=c]{125}{--}}}
\begin{document}

\title{AnchorHash: A Scalable Consistent Hash}
%Join the Last Idle Queue} %the \name variable was defined above

\ifblind
	\author{} %{Paper \#\IK{???}, \pageref{LastPage} pages}
\else
  \ifacm %only for ACM
	\ifacmart
    \newcommand{\aut}[2]{#1\texorpdfstring{$^{#2}$}{(#2)}} %solves hyperref problems with exponents
\author{
  \aut{Grad Student}{1},
  \aut{Isaac Keslassy}{1,2}
}% 
      \affiliation{
      %\fontsize{10.7}{9.9}\selectfont 
$^1$ \emph{Technion} \quad 
$^2$ \emph{VMware} }
          \renewcommand{\shortauthors}{S. Chole \emph{et al.}}     % The default list of authors is too long for headers}
 \acmConference[CoNEXT'17]{ACM CoNEXT}{December 2017}{Seoul, South Korea}
%%% potentially enable (i.e. remove the disable command) for the fields below in the camera-ready version          
      %\settopmatter{printacmref=false}
\settopmatter{printfolios=true,printacmref=false} %numbers the pages; remove the ugly ACM reference
   	  \setcopyright{none}
	\else 
    	\author{List of authors}
     \fi
        % Bryce Cronkite-Ratcliff$^{1,2}$, Aran Bergman$^3$,
		% Shay Vargaftik$^3$, Madhusudhan Ravi$^1$, \\
		% Nick McKeown$^2$, Ittai Abraham$^1$, Isaac Keslassy$^{1,2,3}$ \\
        %{\normalsize $^1$ \emph{VMware} \quad $^2$ \emph{Stanford} \quad $^3$ \emph{Technion}}
        
        %\email{\fontsize{10.7}{9.9}\selectfont \{anirudh, suvinay, alizadeh, hari\}@csail.mit.edu, anurag@barefootnetworks.com, \{csharad, dachuang, edsall\}@cisco.com, \{skatti, nickm\}@stanford.edu}
        
  \else %only for IEEE
     \Conf{
     
     \author{{Gal Mendelson,
        Shay Vargaftik,
        Katherine Barabash,
        Dean Lorenz,
        Isaac Keslassy, %~\IEEEmembership{Senior~Member,~IEEE,}
        and Ariel Orda %~\IEEEmembership{Fellow,~IEEE}%
\thanks{ Gal Mendelson is with the Technion and IBM Research, e-mail: galmendelson@gmail.com. Shay Vargaftik is with VMware Research, e-mail: shayv@vmware.com. Katherine Barabash and Dean Lorenz are with IBM Research, e-mail: \{dean@il, kathy@il\}.ibm.com. Isaac Keslassy and Ariel Orda are with the Technion, e-mail: \{isaac@ee, ariel@ee\}.technion.ac.il.}}
%        \newcommand{\aut}[2]{#1\texorpdfstring{$^{#2}$}{(#2)}}
        %\newcommand{\aut}[2]{#1\texorpdfstring{$^{#2}$}}
	%\author{
       %\IEEEauthorblockN{\large   
      		%\aut{Gal Mendelson}, %{1,2},
      		%\aut{Shay Vargaftik}, %{3},
      		%\aut{Katherine Barabash}, %{1},
      		%\aut{Dean Lorenz}, %{1},
  			%\aut{Isaac Keslassy}, %{2},       
      		%\aut{Ariel Orda},% {2}  
      		%}
      %\IEEEauthorblockA{
          		%$^1$ \emph{IBM} \quad 
				%$^2$ \emph{Technion} \quad 
				%$^3$ \emph{VMware Research}}
          %{\small \tt \{\{shayvar,galmendelson\}@tx,\{isaac,ariel,atar\}@ee\}.technion.ac.il}}
          % Alternative 1.
          %\IEEEauthorblockN{Grad Student}
          %\IEEEauthorblockA{
          %Technion\\
          %{\small \tt grad@tx.techion.ac.il}}
          %\and
          %\IEEEauthorblockN{Isaac Keslassy}
          %\IEEEauthorblockA{
          %Technion\\
          %{\small \tt isaac@ee.techion.ac.il}}

          % Alternative 2.
          %\author{\IEEEauthorblockN{%\large
          %Great Student and Isaac Keslassy}
          %%\IEEEauthorblockA{Dept. of Electrical Engineering\\
          %%Technion\\% - Israel Institute of Technology\\
          %%Haifa 32000, Israel\\
          %\IEEEauthorblockA{Technion\\
          %\{gs@tx,isaac@ee\}.technion.ac.il }} %{\tt isaac@ee.techion.ac.il}
          
          % Alternative 3.
          %%% This option goes with package authblk
          %\author[1]{Grad Student}
          %\author[1]{Isaac Keslassy}
          %\author[2]{Company Engineer}
          %\affil[1]{Technion, {\small \tt \{grad@tx, isaac@ee\}.technion.ac.il}}
          %\affil[2]{Company, {\small \tt \{engineer\}@company.com}}             
          }
      }
      \TR{\author{\IEEEauthorblockN{\large Grad Student and Isaac Keslassy}}}
      
  \fi %end of \ifacm
\fi %end of \ifblind

% For TR, remove for journal:
\OnlyTR{\markboth{Technical Report TR16-01, Technion, Israel}{}}

% make the title area -> in acmart, it's postponed after the abstract
\ifacmart
\else
	\maketitle
\fi

%For ACM formats
\ifacm %only for ACM
	%\sloppy	
    \sloppypar
\else %only for IEEE
\fi

%Gal's commands
\newcommand{\noi}{\noindent}
\newcommand{\Z}{{\mathbb Z}}

%=========================================================================
%  Abstract
%=========================================================================
% can also use one file per section with \input command, e.g.:
% \input{abstract}

\begin{abstract}
Consistent hashing  
is a central building block in many networking applications, such as maintaining connection affinity of TCP flows. However, current consistent hashing solutions do not ensure full consistency under arbitrary changes or scale poorly in terms of memory footprint, update time and key lookup complexity. 

We present AnchorHash, a scalable and fully-consistent hashing algorithm. \name achieves high key lookup rate, low memory footprint and low update time. We formally establish its strong theoretical guarantees, and present an advanced implementation with a memory footprint of only a few bytes per resource. Moreover, evaluations indicate that AnchorHash scales on a single core to 100 million resources while still achieving a key lookup rate of more than 15 million keys per second.
 
\end{abstract}

\begin{IEEEkeywords}
Consistent hashing, load balancing.
\end{IEEEkeywords}

%%do not delete: %%% Addition to play with the bib file (2)
%\bstctlcite{IEEEexample:BSTcontrol}

% make the title area -> in acmart, it's postponed after the abstract
\ifacmart
	\maketitle
\else
\fi

\TR{
 \let\thefootnote\relax\footnotetext{
  \Journal{
    \textbf{[Add this 1st paragraph in 1st journal submission only:]} This paper was presented in part at IEEE Infocom '15, Charleston SC, April 2015. Additions to the conference version include new theorems, complete proofs that were previously omitted for space reasons, and additional simulation results.
  }

 G. Student and I. Keslassy are with the Department of Electrical Engineering, Technion, Israel (e-mails: \{grad@tx, isaac@ee\}.technion.ac.il).
 }
}

%=========================================================================
%  Introduction
%=========================================================================

\section{Introduction}
\label{sec_introduction}
\noindent\textbf{Background.} Consistent hashing (CH) aims at 
mapping the identifiers ({keys}) of incoming objects  
into a set of resources, while achieving (1)~\emph{minimal disruption}, \ie minimum mapping changes as resources are arbitrarily removed or added, and 
(2)~\emph{balance}, \ie even spreading of the keys across resources 
such that no resource is overloaded. 

CH is a central building block in many networking applications, such as datacenter load balancing, distributed hash tables, and distributed storage\cite{maglev,goel2017alternate, chord, halperin2014demonstration, fonseca2005beacon,dynamo,kademlia}. For instance, it is used by L4 load-balancers to evenly forward incoming packets to servers, while  maintaining the affinity of TCP connections while servers are removed or added. However, as we next describe, despite recent advances~\cite{maglev,beamer,faild}, current CH solutions do not ensure full consistency under arbitrary changes or scale poorly in terms of memory footprint, update time and key lookup complexity. 

\T{Related Work.} 
Consistent hashing was first introduced  in the context of caching using the Ring algorithm (also called Consistent Hashing)~\cite{karger,webcaching}. Several variations of the traditional Ring algorithm have been suggested in the literature to improve balance, \eg \cite{multiprobe,perfect}. Such Ring-based solutions face scalability issues, since they require a significant memory footprint and an increasing key lookup complexity.

Another well-known CH algorithm is Highest-Random-Weight (HRW)~\cite{HRW}, also designed with the goal of increasing cache hit rates. It was later applied in the design of a location service for wireless networks 
\cite{wireless-hrw} as well as in data storage systems~\cite{coblitz}. While HRW offers good balance and small memory footprint, its computational complexity is prohibitive. 

To achieve high key-lookup rate, MaglevHash \cite{maglev} and similar techniques (\eg \cite{beamer,faild})  rely on large memory tables.
These solutions
sacrifice full consistency, memory footprint, and update time upon resource additions and removals. 

Several additional algorithms are designed for special cases where resources cannot be removed or added arbitrarily.  For example, Jump 
Consistent Hash 
\cite{jump} assumes that resources can only be added or removed in a specific order. Two additional approaches that do not support resource additions are considered in~\cite{globecom}. The second approach shares some design features with our algorithm, but its implementation cannot scale due to a large memory footprint. 

\T{\name.} In this paper, we present \emph{\name}, a new hashing technique that guarantees minimal disruption, balance, high lookup rate, low memory footprint, and fast update time after resource additions and removals. Table~\ref{tab:main_comp} shows how \name is the only algorithm 
to achieve these goals at once. As opposed to the other algorithms, \name's decisions depend on past events in the system. 

We first introduce \name, which hashes the incoming object's key into successively smaller sets of resources until eventually obtaining its unique mapped resource. We show how \name stays consistent under arbitrary resource removals and additions by keeping some history (\se{sec_anchorhash}). 

Then, we formally prove that \name is consistent, \ie guarantees minimal disruption and balance. We further prove that the average number of required hash computations in a key lookup depends only on the fraction of randomly failed resources and not on their absolute number. This allows for a very high key lookup rate at scale. We prove that even under extreme failure conditions, where $50\%$  of resources are removed in an adversarial manner, a key lookup by \name still requires less than $\approx1.69$ hash computations on average and admits a low standard deviation of less than $\approx 0.83$ (\se{sec:properties}).

%%%%%%%%%%%%%%%%%%%%%%%%%%%%%%%%%%%%%%%%%%%%%%%%%%%%%%%%%%%%%%%%%%%%%%%%%%%%%%%%%%%%%%%%%%%%%%%
%%%%%%%%%%%%%%%%%%%%%%%%%%%%%%%%%%%%%%%%%%%%%%%%%%
\begin{table}[!t]
\centering  \scriptsize  \ra{0.9}
\begin{tabular}
{p{1.3cm} l >{\centering}p{.65cm} >{\centering}p{.65cm} >{\centering}p{.65cm} >{\centering}p{.65cm}l} 
\toprule
& & HRW \cite{HRW} & Ring \cite{karger} & Maglev\-Hash~\cite{maglev} & \textbf{\name} &\\
\midrule
\textbf{Consistency} &
\textbf{Minimal disrupt.} & \checkmark & \checkmark & $\times$ & \checkmark &\\
& \textbf{Balance} & \checkmark & \vx & \vx & \checkmark &\\
\midrule
\textbf{Scalability} &
\textbf{Lookup rate} & $\times$ & $\times$ & \checkmark & \checkmark &\\
& \textbf{Memory} & \checkmark & \vx & \vx & \checkmark &\\
& \textbf{Update time} & \checkmark & \vx & $\times$ & \checkmark &\\
\midrule
\textbf{Statelessness} & & \checkmark & \checkmark & \checkmark & $\times$ &\\  
\bottomrule 
\end{tabular}
\caption{Comparison of \name and common CH algorithms. Existing algorithms sacrifice full consistency and/or scalability, while \name aims at providing both. As we later show, \name leverages state information to achieve its properties and therefore is not stateless.}
\label{tab:main_comp}
\end{table}

Next, we focus on implementing \name. Using several successive improvements in the data representation structures, we show how \name can be reduced to an $O(1)$ memory footprint per resource, at the cost of a slight increase in complexity (\se{sec:anchor:implementation}). 

We then evaluate \name as well as HRW~\cite{HRW}, Ring~\cite{karger} and MaglevHash~\cite{maglev}, using the criteria in Table~\ref{tab:main_comp}. \name and MaglevHash are the only algorithms that achieve high key lookup rate at scale, but MaglevHash sacrifices its consistency, and also requires a high memory footprint and a prohibitive update time. On the other hand, \name achieves low memory footprint and a negligible update time. In fact, we find that \name scales on a single core to 100 million resources while achieving a key lookup rate of more than 15 million keys per second (\se{sec:eval}). Finally, the code for AnchorHash appears in \cite{anchorcode}.

%=========================================================================
%  Anchorhash
%=========================================================================

\section{Preliminaries} 
\label{sec_preliminaries}

We wish to map object keys to resources.
Let $\mathcal{U}$ denote the set of keys, and let $\mathcal{S}$ denote the current set of resources. For example, in the context of datacenter L4 load-balancing, keys may correspond to packet 5-tuples and resources to servers.

\T{Mapping keys to resources.} As \fig{fig:scheme} illustrates, we use \emph{indirection} by first mapping keys to \emph{buckets}, then {buckets} to resources. Specifically, current existing resources, \ie members of $\mathcal{S}$, are assigned to buckets in a one-to-one correspondence.
Buckets belong to a set denoted by $\mathcal{A}$, whose size corresponds to the number of available resources in the system, \eg the number of servers in the load balanced cluster. Let $\mathcal{W} \subseteq \mathcal{A}$ denote the subset of buckets that are currently assigned to resources, which we call the \emph{working set}. We refer to buckets in $\mathcal{W}$ as \emph{working} buckets. Note that the set $\mathcal{A}$ of all possible buckets is fixed, while its subset  $\mathcal{W}$ changes upon resource removals/additions, \eg due to server failures or maintenance operations.
Thus the mapping can be decomposed into two parts: \\
\emph{(i) Keys to buckets.} A key is first mapped to a bucket in $\mathcal{W}$.\\
\emph{(ii) Buckets to resources.} The corresponding resource in $\mathcal{S}$ is deduced from the bucket using the indirection.

%%%%%%%%%%%%%%%%%%%%%%%%%%%%%%%%%%%%%%%%%%%%%%%%%%
\begin{figure}[!t]
\centering
\MyIncludeGraphics[width=0.99\linewidth]{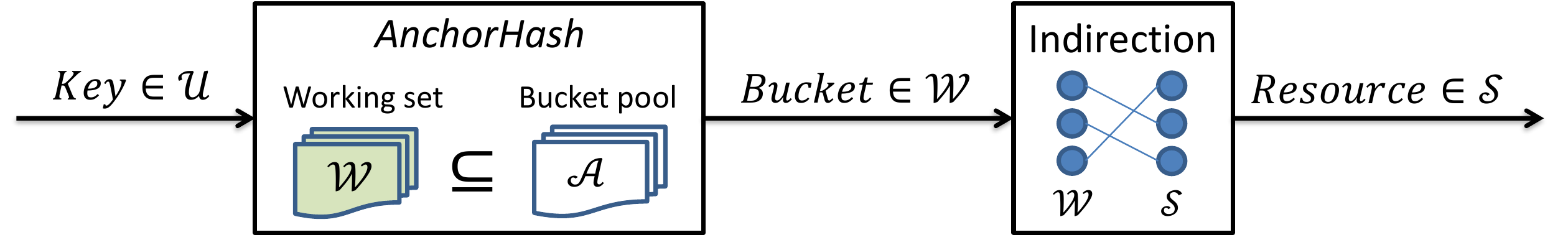}
\parbox{0.95\columnwidth}{%
\caption{\name uses indirection in order to compute the key-to-resource mapping. It first sets a bijective mapping between buckets and resources (right side), and then computes for each incoming key a key-to-bucket mapping (left side). 
}
\label{fig:scheme}}
\end{figure}

%%%%%%%%%%%%%%%%%%%%%%%%%%%%%%%%%%%%%%%%%%%%%%%%%%%%%%%%%%
\T{Resource removal and addition.} 
Resources can be added and removed arbitrarily. Upon a removal, the corresponding (bucket,resource) pair is removed from the indirection, and the bucket is removed from $\mathcal{W}$. When a resource is added, it is assigned a bucket in $\mathcal{A}{\setminus}\mathcal{W}$, the bucket is added to $\mathcal{W}$ and the pair (bucket,resource) is added to the indirection. 

Note that a resource removal uniquely determines the bucket to remove from $\mathcal{W}$. However, when a resource is added, due to the indirection, any bucket in $\mathcal{A}{\setminus}\mathcal{W}$ can be added. This property is one of the building blocks we use to construct \name. 

The rest of this section is devoted to the first part of mapping keys to buckets, since the second indirection-based part is straightforward. We henceforth refer to adding/removing a resource as adding/removing a bucket. 

%%%%%%%%%%%%%%%%%%%%%%%%%%%%%%%%%%%%%%%%%%%%%%%%%%%%%%%%%%%%%%%%%%%%%%%%%%%%%%%%%%%%%%%

\T{Goals.} We start by formally defining our goals.
We seek a consistent hash algorithm that maps keys to buckets and satisfies the following joint objectives of \emph{minimal disruption} and \emph{balance}:

\begin{definition}[Minimal disruption] \label{def:md}
A hash algorithm achieves  \emph{minimal disruption}  iff \\
\emph{(i)} Upon the addition of a bucket $b \in \mathcal{A} {\setminus} \mathcal{W}$ to $\mathcal{W}$, keys either maintain their mapping or are remapped to $b$. \\
\emph{(ii)} Upon the removal of a bucket $b \in \mathcal{W}$, keys that were not mapped to $b$ keep their mapping, and keys that were mapped to $b$ are remapped to members of $\mathcal{W} {\setminus} \{b\}$.
\end{definition}

%\new{\U{Example} Consider a hash algorithm that operates as follows: we first hash key $k$ into the set $\mathcal{A}$, \ie $hash(k) \mod |\mathcal{A}|$. If $hash(k) \mod |\mathcal{A}| \in \mathcal{W}$ then we are done. Otherwise, we rehash $k$ directly into $\mathcal{W}$, \ie we return $hash'(k) \mod |\mathcal{W}|$.}

%\new{Albeit simple, this algorithm does not respect the minimal disruption property. Assume an initial state in which $\mathcal{A}==\mathcal{W}$ and assume a bucket $b$ is removed. Let $k$ be any key such that $hash(k) \mod |\mathcal{A}| = b$. It means that after $b$ was removed such a key is remapped to $hash'(k) \mod |\mathcal{A}\backslash \{b\}|$. Now consider a second bucket $b'$ was removed. Any such key $k$ is now remapped to a possibly different bucket, \ie $hash'(k) \mod |\mathcal{A}\backslash \{b,b'\}|$, violating the minimal disruption property.} 

\U{Example} Consider a hash algorithm
$\textsc{Hash}(\mathcal{W}, k) =  \mathcal{W}[h]$, where 
\begin{equation*}
h = 
\begin{cases}
h_1(k) \equiv \textit{hash}(k) \mod \vert \mathcal{A} \vert &\quad\text{if $h_1(k) < \vert \mathcal{W} \vert $,} \\
h_2(k) \equiv \textit{hash'}(k) \mod \vert \mathcal{W} \vert  &\quad\text{otherwise,}
\end{cases}
\end{equation*}
and $\mathcal{W} \subset \mathcal{A}$. Consider the returned hash upon removal of the last bucket $b$ from $\mathcal{W}$. If $\textsc{Hash}(\mathcal{W}, k) = \mathcal{W}[h_1(k)]$ then, since $h_1(k)$ does not depend on $\mathcal{W}$, $\textsc{Hash}(\mathcal{W}\setminus\{b\}, k) = \textsc{Hash}(\mathcal{W}, k)$, unless $\mathcal{W}[h] = b$. Minimal disruption holds for this subset of keys. 
However, if $\textsc{Hash}(\mathcal{W}, k) = \mathcal{W}[h_2(k)]$ then
most keys would get a different result, including keys that were not previously mapped to $b$. Minimal disruption does not hold in this case.

\begin{definition}[Balance] \label{def:balance}
Let $k \in \mathcal{U}$ be a key. A hash algorithm achieves \emph{balance} iff $k$ has an equal probability of being mapped to each bucket in $\mathcal{W}$. 
\end{definition}

\begin{definition}[Consistency] \label{def:c}
We define a hash algorithm as \emph{consistent} iff it achieves both \emph{minimal disruption} and \emph{balance}. 
\end{definition}

\name uniformly hashes keys to bucket sets using hash functions. Accordingly, for our theoretical exposition, we make the following standard assumption (\eg \cite{gonnet1991handbook,knuth1998sorting}). 

\T{Uniform hashing assumption.} For every subset $\mathcal{B} \subseteq \mathcal{A}$, let 
$H_{\mathcal{B}}:\mathcal{U} \rightarrow \mathcal{B}$ be a hash function which maps keys to $\mathcal{B}$ such that: (1) $\forall k \in \mathcal{U}$ we have that $H_{\mathcal{B}}(k)$ is a uniform random variable on $\mathcal{B}$, and (2) for any sequence of distinct subsets ${\mathcal{B}}_1,{\mathcal{B}}_2,\dots$ the random variables $H_{{\mathcal{B}}_1}(k),H_{{\mathcal{B}}_2}(k),\ldots$ are independent. In practice, this can be approximated by introducing a random \emph{seed} into a hash function, \eg $H_{\mathcal{B},Seed}$.

%%%%%%%%%%%%%%%%%%%%%%%%%%%%%%%%%%%%%%%%%%%%%%%%%%%%%%%%%%%%%%%%%%%%%%%%%%%%%%
%%%%%%%%%%%%%%%%%%%%%%%%%%%%%%%%%%%%%%%%%%%%%%%%%%%%%%%%%%%%%%%
\section{AnchorHash}
\label{sec_anchorhash}

\subsection{\name principles} 
We now explain how \name maps keys to buckets. We start with an initial working set, and then discuss how buckets are removed and added.

\T{Initial mapping.} Suppose we begin with a working set $\mathcal{W}$. We use the hash function $H_{\mathcal{W}}$ to map keys to 
$\mathcal{W}$.
By the uniform hashing assumption, for any key $k$, each member of $\mathcal{W}$ has an equal probability to be chosen, thus achieving balance (Def.~\ref{def:balance}).

\T{Bucket removal.}
Now, suppose that we want to remove a bucket $b \in \mathcal{W}$. 
If we use the new hash function $H_{\mathcal{W} \setminus b}$ to map keys to buckets, keys that were mapped to members of $\mathcal{W} \setminus b$ by $H_{\mathcal{W}}$ might be remapped, and the minimal disruption property will not hold.  

To address this issue, the key idea in \name is to keep using $H_\mathcal{W}(k)$ as long as $H_\mathcal{W}(k)\neq b$, and otherwise  rehash the key to $\mathcal{W} \setminus b$ using $H_{\mathcal{W} \setminus b}(k)$. For instance, assume that the initial working set is $\mathcal{W}=\brac{0,\dots,6}$. Then we are hashing any key $k$ using $H_{\brac{0,\dots,6}}(k)$. Assume now that bucket $6$ is removed. Then we continue to first hash any key $k$ using $H_{\brac{0,\dots,6}}(k)$. If it hits a bucket in $\brac{0,\dots,5}$, we are done. Otherwise, we rehash the key using $H_{\brac{0,\dots,5}}(k)$, with the result guaranteed to be a working bucket.  

This approach preserves the consistency of the algorithm, as we later formally prove. First, only keys that were mapped to $b$ are remapped, thus minimal disruption is achieved. Second, by the uniform hashing assumption, keys that did not initially hit $b$ are spread uniformly over $\mathcal{W} \setminus b$, and the same is true for the keys that initially hit $b$ and are rehashed. Therefore, balance is also achieved. 

When several buckets are removed, we repeat this procedure iteratively until hitting a bucket in the working set. To simplify the notation, we denote by $\mathcal{W}_b$ the working set right after the removal of a bucket $b$. 

\U{Example.} \fig{fig:main_example} illustrates this procedure with an initial working set $\mathcal{W}{=}\{0,1,2,3,4,5,6\}$ and buckets $6$, $5$ and $1$ removed consecutively.
\fig{fig:anchor_and_example_1} shows a simple example of a key that is immediately hashed to a bucket in the working set. \fig{fig:anchor_and_example_2} shows a more complex example in which the key is repeatedly hashed to decreasing subsets until reaching a bucket in the working set.

%%%%%%%%%%%%%%%%%%%%%%%%%%%%%%%%%%%%%%%%%%%%%%%%%%%%%%%%%%%%%%%

\begin{figure}[!t]
\centering\footnotesize
%\begin{center} %\footnotesize \ra{1.2}
%%%%%%%%%%
\begin{subfigure}[c]{\columnwidth} \centering
%[\footnotesize Initial working set $\mathcal{W}{=}\{0,1,2,3,4,5,6\}$.]{
\begin{tabular}{@{}l@{\hspace{-.03\columnwidth}}C{0.13\columnwidth}@{}C{0.13\columnwidth}@{}C{0.13\columnwidth}@{}C{0.13\columnwidth}@{}C{0.13\columnwidth}@{}C{0.14\columnwidth}@{}C{0.17\columnwidth}@{}}%{@{}lC{\myc}C{\myc}C{\myc}C{\myc}C{\myc}C{\myc}C{\myc}@{}} 
\toprule
                                          & 0           & 1             & 2           & 3           & 4           & 5           & 6 \makebox[0.04\columnwidth]{}           \\  
\midrule %\cmidrule{2-8}
$\mathcal{W}$       & \checkmark  & \checkmark    & \checkmark  &  \checkmark & \checkmark  & \checkmark  & \checkmark  \makebox[0.04\columnwidth]{} \\ %\hline
$\mathcal{W}_b$     &             &               &             &             &             &             &             \\ 
\bottomrule 
\end{tabular}\\[-1ex]
\caption{\footnotesize Initial working set $\mathcal{W}{=}\{0,1,2,3,4,5,6\}$.}
\label{fig:main_example:a}
\end{subfigure}\\[1ex]

% %%%%%%%%%%
\begin{subfigure}[c]{\columnwidth} \centering
\begin{tabular}{@{}l@{\hspace{-.03\columnwidth}}C{0.13\columnwidth}@{}C{0.13\columnwidth}@{}C{0.13\columnwidth}@{}C{0.13\columnwidth}@{}C{0.13\columnwidth}@{}C{0.14\columnwidth}@{}C{0.17\columnwidth}@{}}%{@{}lC{\myc}C{\myc}C{\myc}C{\myc}C{\myc}C{\myc}c@{}} 
\toprule
                                          & 0           & 1             & 2           & 3           & 4           & 5           & 6 \makebox[0.04\columnwidth]{}          \\  
\midrule %\cmidrule{2-8}
$\mathcal{W}$       & \checkmark  & \checkmark    & \checkmark  &  \checkmark & \checkmark  & \checkmark  & $\times$ \makebox[0.04\columnwidth]{} \\ %\hline
$\mathcal{W}_b$     &             &               &             &             &             &             &  $\scriptstyle\set{0,1,2,3,4,5}$           \\ 
\bottomrule 
\end{tabular}\\[-1ex]
\caption{\footnotesize Removing bucket $6$ with $\mathcal{W}_6{=}\{0,1,2,3,4,5\}.$}
\label{fig:main_example:b}
\end{subfigure}\\[1ex]

% %%%%%%%%%

\begin{subfigure}[c]{\columnwidth} \centering
\begin{tabular}{@{}l@{\hspace{-.03\columnwidth}}C{0.13\columnwidth}@{}C{0.13\columnwidth}@{}C{0.13\columnwidth}@{}C{0.13\columnwidth}@{}C{0.13\columnwidth}@{}C{0.14\columnwidth}@{}C{0.17\columnwidth}@{}}%{@{}lC{\myc}C{\myc}C{\myc}C{\myc}C{\myc}cc@{}} 
\toprule
                                          & 0           & 1             & 2           & 3           & 4           & 5           & 6 \makebox[0.04\columnwidth]{}          \\  
\midrule %\cmidrule{2-8}
$\mathcal{W}$       & \checkmark  & \checkmark    & \checkmark  &  \checkmark & \checkmark  & $\times$  & $\times$ \makebox[0.04\columnwidth]{} \\ %\hline
$\mathcal{W}_b$     &             &               &             &             &             &      $\scriptstyle\set{0,1,2,3,4}$          &  $\scriptstyle\set{0,1,2,3,4,5}$           \\ 
\bottomrule 
\end{tabular}\\[-1ex]
\caption{\footnotesize Removing bucket $5$ with $\mathcal{W}_5{=}\{0,1,2,3,4\}.$}
\label{fig:main_example:c}
\end{subfigure}\\[1ex]

% %%%%%%%%%

\begin{subfigure}[c]{\columnwidth} \centering
\begin{tabular}[l]{@{}l@{\hspace{-.03\columnwidth}}C{0.13\columnwidth}@{}C{0.13\columnwidth}@{}C{0.13\columnwidth}@{}C{0.13\columnwidth}@{}C{0.13\columnwidth}@{}C{0.14\columnwidth}@{}C{0.17\columnwidth}@{}}%{@{}lC{\myc}cC{\myc}C{\myc}C{\myc}cc@{}} 
\toprule
                                          & 0           & 1             & 2           & 3           & 4           & 5           & 6 \makebox[0.04\columnwidth]{}          \\  
\midrule %\cmidrule{2-8}
$\mathcal{W}$       & \checkmark  & $\times$    & \checkmark  &  \checkmark & \checkmark  & $\times$  & $\times$ \makebox[0.04\columnwidth]{} \\ %\hline
$\mathcal{W}_b$     &             & $\scriptstyle\set{0,2,3,4}$ &             &             &             &      $\scriptstyle\set{0,1,2,3,4}$          &  $\scriptstyle\set{0,1,2,3,4,5}$           \\ 
\bottomrule 
\end{tabular}\\[-1ex]
\caption{\footnotesize Removing bucket $1$ with $\mathcal{W}_1{=}\{0,2,3,4\}.$}
\label{fig:main_example:d}
\end{subfigure}

% %%%%%%%%%
%\end{center}
\caption{Example with an initial working set $\mathcal{W}=\set{0,1,2,3,4,5,6}$ (\fig{fig:main_example:a}). Then, bucket $6$, $5$ and $1$ are removed consecutively (in Figures \ref{fig:main_example:b}, \ref{fig:main_example:c} and \ref{fig:main_example:d}, respectively).
}
\label{fig:main_example}
\end{figure}

%%%%%%%%%%%%%%%%%%%%%%%%%%%%%%%%%%%%%%%%%%%%%%%%%%%%%%%%%%%%%%%%

%%%%%%%%%%%%%%%%%%%%%%%%%%%%%%%%%%%%%%%%%%%%%%%%%%%%%%%%%%%%%%%

%%%%%%%%%%%%%%%%%%%%%%%%%%%%%%%%%%%%%%%%%%%%%%%%%%%%%%%%%%%%%%%

%%%%%%%%%%%%%%%%%%%%%%%%%%%%%%%%%%%%%%%%%%%%%%%%%%%%%%%%%%%%%%%

\T{Bucket addition.}  Suppose that the last bucket that was removed was $b$, and the current working set is $\mathcal{W}$ (\ie $\mathcal{W}_b$=$\mathcal{W}$). Recall that \name may add any bucket not in $\mathcal{W}$ by virtue of the indirection. 
If we need to add a new bucket, we choose to add back bucket $b$.
More generally, upon bucket addition, \name \emph{always} adds the last removed bucket. We show in \se{sec:anchor:implementation} that this allows for an extremely efficient implementation. 
This is because by our iterative construction, adding the last removed bucket $b$ simply brings us back to the state just before $b$'s removal. Specifically, upon the addition of $b$: (1) the only remapped keys are the ones remapped to $b$ (these are the same keys that hit $b$ and were rehashed after $b$ was previously removed), and \textit{minimal disruption} holds; and (2) since \textit{balance} was achieved before $b$ was removed, it is also achieved after it is added back. We prove these claims formally in \se{sec:properties}.

\U{Example.} Consider \fig{fig:main_example:d}. If we add the last removed bucket~$1$, we simply return to the state illustrated in \fig{fig:main_example:c}. At this point, if we add the last removed bucket $5$, we simply return to the state illustrated in \fig{fig:main_example:b}, and so on.

We maintain a LIFO queue (\ie stack) for the removed buckets, denoted by $\mathcal{R}$. For example, in the state illustrated in \fig{fig:main_example:d}, $\mathcal{R}{=}\{6 \leftarrow 5 \leftarrow 1\}$. 

\T{Anchor.} By construction, $\vert \mathcal{A} \vert$ is an upper bound on the number of buckets that we allow. Therefore, in practice, we simply set the value of $\vert \mathcal{A} \vert$ to a larger value than may be needed  (\eg 2$\times$ the initial system size) and insert the unused buckets (\ie members of $\mathcal{A} {\setminus} \mathcal{W}$) into the stack $\mathcal{R}$. Note that this initial order within $\mathcal{R}$ may be arbitrary. We later leverage this observation to optimize implementation. Since $\mathcal{A}$ serves as the starting point of the algorithm on which everything is defined, we refer to it as the \emph{Anchor}. 

\U{Example.} Consider again \fig{fig:main_example:a}. Assume that instead of beginning our operation with $\mathcal{W}{=}\{0,1,2,3,4,5,6\}$, we would like to start our system with only $\mathcal{W}{=}\{0,1,2,3,4\}$, but want to be prepared to increase $\mathcal{W}$ to include buckets $5$ and $6$ if needed. Then, we simply start our system with $\mathcal{A}{=}\{0,1,2,3,4,5,6\}$, and initially set $\mathcal{W}{=}\{0,1,2,3,4\}$ and $\mathcal{R}{=}\{6 \leftarrow 5\}$. This precise state is illustrated in \fig{fig:main_example:c}.  

%%%%%%%%%%%%%%%%%%%%%%%%%%%%%%%%%%%%%%%%%%%%%%%
\begin{figure}[!t]
\centering

\begin{subfigure}[c]{\columnwidth} \centering
\MyIncludeGraphics[width=0.99\linewidth]{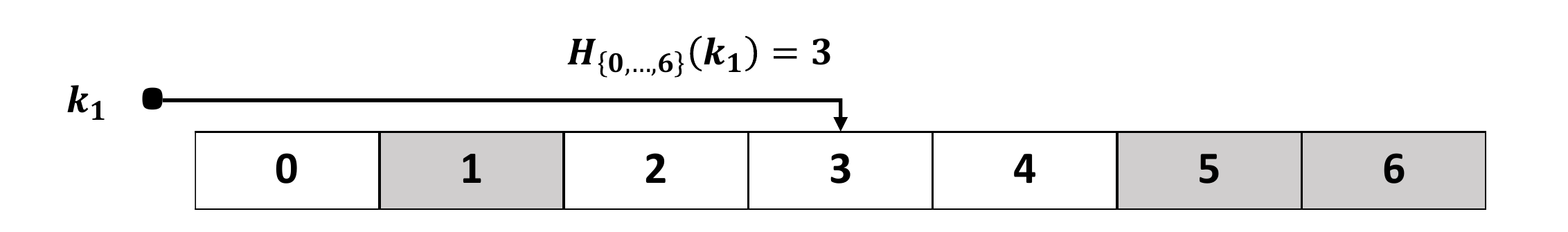}
\parbox{0.8\columnwidth}{%
\caption{\footnotesize$k_1$ is immediately hashed to bucket $3 \in \mathcal{W}$.}
\label{fig:anchor_and_example_1}}
\end{subfigure}

\begin{subfigure}[c]{\columnwidth} \centering
\MyIncludeGraphics[width=0.99\linewidth]{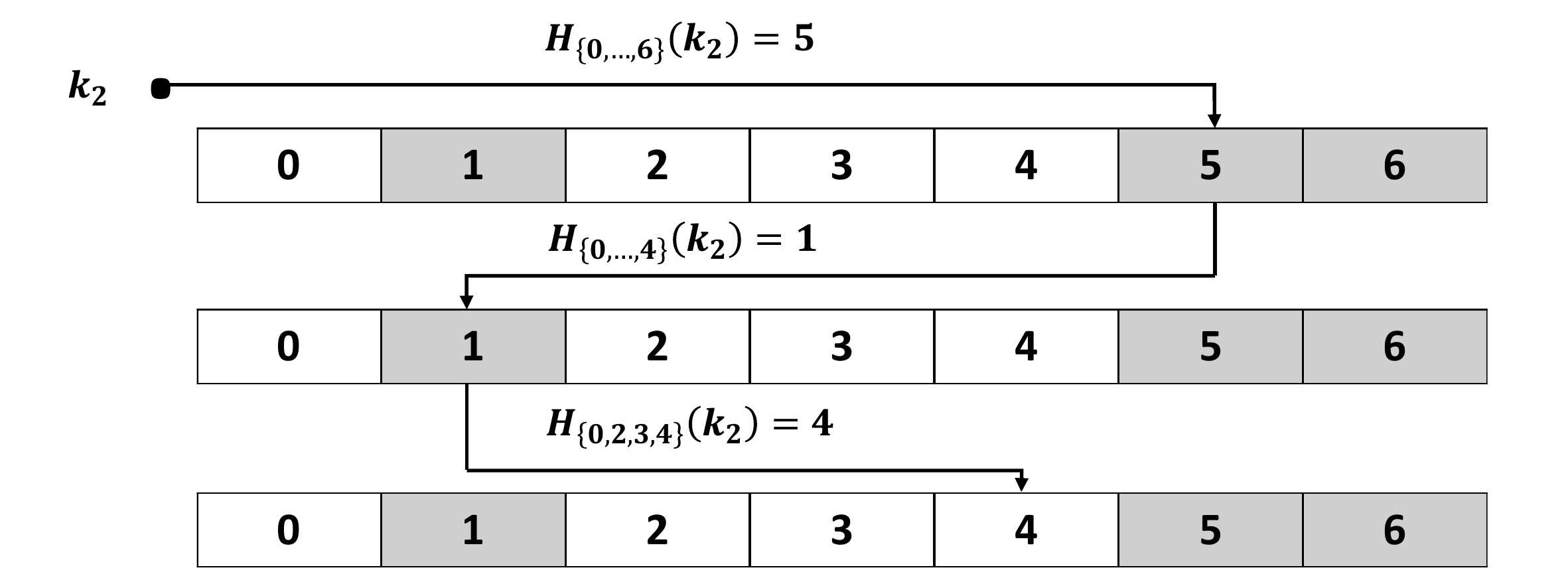}
\parbox{0.8\columnwidth}{%
\caption{\footnotesize$k_2$ is initially hashed to bucket $5$. Then, since $5 \not\in \mathcal{W}$, following \fig{fig:main_example:d}, $k_2$ is rehashed into the set $\{0,...4\}$. Assume that it is rehashed to $1 \in \mathcal{W}_5$. Since $1 \not\in \mathcal{W}$ as well, $k_2$ is rehashed again to $4 \in \mathcal{W}_1$. Since $4 \in \mathcal{W}$, the process terminates.}
\label{fig:anchor_and_example_2}}
\end{subfigure}

%\hfill
\caption{Example of possible key lookups in the state presented in \fig{fig:main_example:d}.} 
\label{fig:anchorhash}
\end{figure}
%%%%%%%%%%%%%%%%%%%%%%%%%%%%%%%%%%%%%%%%%%%%%%%%%%%%%%%%%%%%%%%
%%%%%%%%%%%%%%%%%%%%%%%%%%%%%%%%%%%%%%%%%%%%%%%%%%%%%%%%%%%%%%%%%%%%%%%%%%%%%%%
\subsection{\name algorithm}
%%%%%%%%%%%%%%%%%%%%%%%%%%%%%%
%% Anchor
%%%%%%%%%%%%%%%%%%%%%%%%%%%%%%
\begin{algorithm}[!t]
	\footnotesize
    \begin{algorithmic}[1]
     \Function{InitAnchor}{$\mathcal{A}, \mathcal{W}$}
        \State $\mathcal{R} \gets \emptyset$
        \For{$b \in \mathcal{A} \backslash \mathcal{W}$}
        	\State $\mathcal{R}.push(b)$
        	\State $\mathcal{W}_b \gets \mathcal{A} \backslash \mathcal{R}$
        \EndFor
    \EndFunction
    
	%\\\hrulefill
	\\\hrulefill

        \Function{GetBucket}{$k$}
        	\State $b \gets H_{\mathcal{A}}(k)$
        	\While {$b \not \in \mathcal{W}$}
        		\State $b \gets H_{\mathcal{W}_b}(k)$
        	\EndWhile
        	\State\Return $b$
        \EndFunction
\begin{comment} 
%	this is for easier presentation
	\\\hrulefill

        \Function{GetServer}{$name$}
        	\State $s \gets Hash_{\mathcal{A}}(name)$
        	\While {$s \not \in \mathcal{U}$}
        		\State $s \gets Hash_{\mathcal{U}_s}(name)$
        	\EndWhile
        	\State\Return $s$
        	\State $\mathcal{A}$ - Anchor set of servers
        	\State $\mathcal{U}$ - Set of currently used servers
        	\State $Hash_{Set}$ - Uniform Hash over $Set$
        	\State $\mathcal{U}_s$ - Used servers when $s$ was removed
        \EndFunction
\end{comment}
	\\\hrulefill
	
        \Function{AddBucket}{ }
			\State $b \gets \mathcal{R}.pop()$
            \State $delete \,\, \mathcal{W}_b$
            \State $\mathcal{W} \gets \mathcal{W} \cup \set{b}$
            \State\Return $b$
        \EndFunction

    \\\hrulefill    

        \Function{RemoveBucket}{$b$}
        	\State $\mathcal{W} \gets \mathcal{W} \backslash \set{b}$
        	\State $\mathcal{W}_b \gets \mathcal{W}$
            \State $\mathcal{R}.push(b)$
        \EndFunction
                        
    \end{algorithmic}   
	\normalsize
    \caption{ --- AnchorHash}
    \label{alg:AnchorHash}
    
\end{algorithm}
%%%%%%%%%%%%%%%%%%%%%%%%%%%%%%%%%%%%%%%%%%%%%%%%%
%%%%%%%%%%%%%%%%%%%%%%%%%%%%%%%%%%%%%%%%%%%%%%%%%

The pseudo-code for \name is given in \al{alg:AnchorHash}.

\T{Initialization.} \textsc{InitAnchor($\mathcal{A},\mathcal{W}$)} receives as an input the Anchor $\mathcal{A}$ and the initial working set of buckets $\mathcal{W}$. We fill the stack $\mathcal{R}$ with the initially unused buckets. For each such unused bucket $b$, we remember $\mathcal{W}_b$, \ie the working set just after its removal. 

 \T{GetBucket.} \textsc{GetBucket($k$)} receives a key $k \in \mathcal{U}$ as an input and returns a working bucket $b \in \mathcal{W}$ as an output. Initially, we hash the key uniformly over the Anchor $\mathcal{A}$; then, if the calculated bucket $b$ is not a member of $\mathcal{W}$, the key is rehashed into $\mathcal{W}_b$. This process continues until hitting a working bucket. We analyze the computational complexity of this procedure in \se{sec:properties} and present empirical evaluation results in \se{sec:eval}.

\T{AddBucket.} As mentioned, when adding a bucket, we add the the last removed bucket. Accordingly, \textsc{AddBucket()} has {no input} and simply returns the added bucket. It pops the last removed bucket $b$ from $\mathcal{R}$, deletes the no-longer-needed $\mathcal{W}_b$, adds $b$ to $\mathcal{W}$ and returns $b$.

\T{RemoveBucket.} \textsc{RemoveBucket($b$)} receives as an input the bucket we want to remove, and has no return value. We simply remove $b$ from $\mathcal{W}$, record the working set just after $b$'s removal $\mathcal{W}_b$ and push $b$ to the top of $\mathcal{R}$.

\T{Indirection.} For completeness, \al{alg:AnchorHashWrapper} presents the full key-to-resource mapping based on indirection (as presented in \fig{fig:scheme}). It complements the key-to-bucket mapping of  \al{alg:AnchorHash} with a standard bucket-to-resource bijection function $M$. For simplicity, we represent this bijection using a set of coupled pairs $(b,\xi)\in M$ such that $M(b)=\xi$ and $M^{-1}(\xi)=b$. Note that such indirection is trivially implemented using a standard map with $O(1)$ operations on average for each bucket or resource lookup.

%%%%%%%%%%%%%%%%%%%%%%%%%%%%%%
%% Mapping
%%%%%%%%%%%%%%%%%%%%%%%%%%%%%%

\begin{algorithm}[!t]
	\footnotesize
    \begin{algorithmic}[1]
    	\Function{InitWrapper}{$\mathcal{A},\mathcal{S}$}
          \State $M \gets \emptyset$, $\mathcal{W} \gets \emptyset$
          \For{$i \in (0,1,\ldots,|\mathcal{S}|-1)$}
              \State $M \gets M \cup \{(\mathcal{A}[i], \mathcal{S}[i])\}$
              \State $\mathcal{W} \gets \mathcal{W} \cup \set{\mathcal{A}[i]}$
          \EndFor
          \State \Call{InitAnchor}{$\mathcal{A}, \mathcal{W}$}
        \EndFunction  

	\\\hrulefill
   
        \Function{GetResource}{$k$} \Comment{Compute resource for key $k$}
			\State $b \gets$ \Call{GetBucket}{$k$}
            \State $\xi \gets M(b)$
	        \State\Return $\xi$
        \EndFunction

	\\\hrulefill

        \Function{AddResource}{$\xi$}
			\State $b \gets$ \Call{AddBucket}{ }
            \State $M \gets M \cup \set{(b,\xi)}$ 
        \EndFunction
        
    \\\hrulefill    

		\Function{RemoveResource}{$\xi$}
        	\State $b \gets M^{-1}(\xi)$ 
            \State $M \gets M \backslash \set{(b,\xi)}$
            \State \Call{RemoveBucket}{$b$}
        \EndFunction
               
    \end{algorithmic}   
	\normalsize
    \caption{ --- AnchorHash Wrapper}
    \label{alg:AnchorHashWrapper}
    
\end{algorithm}
%%%%%%%%%%%%%%%%%%%%%%%%%%%%%%%%%%%%%%%%%%%%%%%%%
%%%%%%%%%%%%%%%%%%%%%%%%%%%%%%%%%%%%%%%%%%%%%%%%%

%=========================================================================
%  Anchorhash properties
%=========================================================================

\section{\name properties}\label{sec:properties}

In this section we first prove that \name is consistent (\ie provides minimal disruption and balance), and then analyze its complexity.

    \begin{theorem}[Minimal disruption]\label{thm:mindisp}
\name guarantees minimal disruption.
\end{theorem}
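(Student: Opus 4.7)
The plan is to establish both clauses of Definition~\ref{def:md} separately, handling removal first and then reducing addition to removal via the LIFO invariant of the stack $\mathcal{R}$.

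For clause (ii) on removal of a bucket $b \in \mathcal{W}$, I would argue by tracing the hash sequence computed by \textsc{GetBucket} for an arbitrary key $k$. Let $b_0 = H_{\mathcal{A}}(k), b_1 = H_{\mathcal{W}_{b_0}}(k), \ldots$ denote this sequence; the key observation is that each $\mathcal{W}_{b_i}$ appearing in the sequence was recorded at the time the bucket $b_i$ was removed, which must have happened \emph{before} the current removal of $b$. Hence removing $b$ does not alter any of the $\mathcal{W}_{b_i}$ values already consulted. Let $i^\star$ be the termination index before removal, so $b_{i^\star} \in \mathcal{W}$. If $b_{i^\star} \neq b$, the same sequence terminates at the same index in the new working set $\mathcal{W}\setminus\{b\}$, so the mapping is preserved. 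If $b_{i^\star} = b$, then after removal the sequence extends by one step to $b_{i^\star+1} = H_{\mathcal{W}_b}(k) = H_{\mathcal{W}\setminus\{b\}}(k)$, which lies in $\mathcal{W}\setminus\{b\}$ by construction of $H$, giving a remapping into $\mathcal{W}\setminus\{b\}$ as required.

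For clause (i) on addition, I would reduce to the removal case by showing that \textsc{AddBucket} is a literal inverse of \textsc{RemoveBucket} on the system state. Since $\mathcal{R}$ is a LIFO stack and \textsc{AddBucket} pops its top element $b$ and deletes $\mathcal{W}_b$, while \textsc{RemoveBucket}($b$) pushes $b$ and records $\mathcal{W}_b = \mathcal{W}\setminus\{b\}$, the full state $(\mathcal{W},\mathcal{R},\{\mathcal{W}_{b_i}\})$ after adding $b$ is identical to the state that existed immediately before $b$'s most recent removal. Because \textsc{GetBucket} is a deterministic function of this state, the post-addition mapping coincides with the pre-removal mapping. Applying the removal claim already proved to that paired (removal, addition), keys whose pre-removal mapping was $b' \neq b$ were unchanged by the removal and hence appear unchanged again after the addition, while keys whose pre-removal mapping was $b$ got remapped away during removal and are now remapped back to $b$. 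This matches Definition~\ref{def:md}(i) exactly.

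The main obstacle I anticipate is not the core argument but the bookkeeping: one must verify carefully that the recorded values $\mathcal{W}_{b_i}$ in the stack are invariants of the current stack content and order, independent of the history of additions and removals that produced that stack. Concretely, for any $b_i \in \mathcal{R}$ sitting at depth $i$ from the bottom, $\mathcal{W}_{b_i} = \mathcal{A} \setminus \{b_0,\ldots,b_i\}$. I would state and prove this as a preliminary invariant (easy induction on the sequence of \textsc{AddBucket}/\textsc{RemoveBucket} calls, also covering the initialization in \textsc{InitAnchor}), since it is what legitimizes both the "recorded $\mathcal{W}_{b_i}$'s are unchanged by further removals" step in the removal argument and the "state fully reverts" step in the addition argument.
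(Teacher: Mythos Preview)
Your removal argument (clause (ii)) is essentially the paper's, stated with more care: the paper simply observes that before the removal a key either terminated at $b$ or never encountered $b$ at all (since $b$ was working, the trace cannot pass through it without stopping there), and then concludes exactly as you do. Your explicit point that the recorded sets $\mathcal{W}_{b_i}$ along the trace are untouched by removing $b$ is left implicit in the paper.

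Your addition argument (clause (i)) takes a genuinely different route. The paper does \emph{not} reduce addition to removal via state reversion; instead it argues directly about the trace, symmetrically to (ii): before the addition, the \textsc{GetBucket} trace for $k$ either passed through $b$ (hitting $b$ and then rehashing via $H_{\mathcal{W}_b}$) or it never touched $b$. If it did pass through $b$, then after the addition that same trace now terminates at $b$; if not, the trace is unaffected. This is shorter and makes (i) and (ii) independent. Your state-reversion approach is also correct and in fact mirrors the paper's \emph{informal} justification of bucket addition in Section~\ref{sec_anchorhash}, but the formal proof opts for the direct trace argument. What your route buys is rigor about the bookkeeping: the preliminary invariant you identify (that $\mathcal{W}_{b_i}$ depends only on the current stack contents below $b_i$, not on the history that produced it) is real and the paper is silent about it; the paper's trace argument for (i) also relies on it tacitly, since one needs that deleting $\mathcal{W}_b$ and nothing else restores exactly the pre-removal \textsc{GetBucket} behavior.
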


\begin{proof} 
\T{(i).} Assume a newly added bucket $b$. Consider function \textsc{GetBucket($k$)}. Then, before $b$'s addition, each $k \in \mathcal{U}$ either encountered bucket $b$ before terminating or not.  After the addition of $b$, keys that did not encounter $b$ are clearly not affected. Those that did now terminate at $b$. 

\T{(ii).} Assume a newly removed bucket $b$. Consider again function \textsc{GetBucket($k$)}. Before $b$'s removal, each $k \in \mathcal{U}$ either terminated at bucket $b$ or did not encounter it at all. After the removal of $b$, keys that did not encounter $b$ are clearly not affected. Those that did, now terminate at $H_{\mathcal{W}_b}(k)$.
\end{proof}

\begin{theorem}[Balance]
\name achieves balance.
\end{theorem}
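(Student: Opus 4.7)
The plan is to show that for any key $k$ and any working bucket $b^{*} \in \mathcal{W}$, the probability that \textsc{GetBucket}$(k)$ returns $b^{*}$ equals $1/|\mathcal{W}|$. The backbone of the argument is a structural invariant: every set the algorithm ever hashes $k$ over contains the current working set. More precisely, if the current stack is $\mathcal{R} = (b_1, \ldots, b_m)$ with $b_m$ on top, then I claim $\mathcal{W}_{b_i} = \mathcal{A} \setminus \{b_1, \ldots, b_i\}$ for every $i \in \{1, \ldots, m\}$, so in particular $\mathcal{W} \subseteq \mathcal{W}_{b_i}$. I would establish this invariant by induction on the sequence of \textsc{InitAnchor}, \textsc{AddBucket}, and \textsc{RemoveBucket} operations: the loop in \textsc{InitAnchor} sets it up explicitly; \textsc{RemoveBucket} pushes $b$ onto the stack and records $\mathcal{W}_b = \mathcal{W}$, which matches the claim for the new top; and \textsc{AddBucket} merely pops the top, discarding its snapshot while leaving all other $\mathcal{W}_{b_j}$ unchanged.

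Given this invariant, consider the random sequence $b^{(0)}, b^{(1)}, b^{(2)}, \ldots$ produced inside \textsc{GetBucket}, with $b^{(0)} = H_{\mathcal{A}}(k)$ and $b^{(t+1)} = H_{\mathcal{W}_{b^{(t)}}}(k)$ whenever $b^{(t)} \notin \mathcal{W}$. The hashing sets along this trajectory are strictly decreasing (each step eliminates at least the index just hit), hence pairwise distinct, and each is a superset of $\mathcal{W}$. By the uniform hashing assumption, the $b^{(t)}$ are therefore independent, each uniform on its respective set.

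The heart of the argument is a short symmetry/conditioning step: conditioned on \textsc{GetBucket} terminating exactly at step $t$, i.e.\ on $b^{(t)} \in \mathcal{W}$, the distribution of $b^{(t)}$ is the restriction of a uniform distribution on a superset of $\mathcal{W}$ to $\mathcal{W}$, hence uniform on $\mathcal{W}$. Since this conditional distribution is the same for every $t$, the unconditional output is uniform on $\mathcal{W}$ as well. Almost-sure termination is routine: given non-termination so far, the probability of terminating at the next step is at least $|\mathcal{W}|/|\mathcal{A}| > 0$, so the non-termination probability decays geometrically in the number of iterations.

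The main obstacle is verifying the invariant $\mathcal{W}_{b_i} = \mathcal{A} \setminus \{b_1, \ldots, b_i\}$ carefully: a snapshot stored for $b_i$ may have been taken many operations ago, and one must check that intervening pushes and pops have not altered what lies below $b_i$ in the stack. The crucial observation is that \textsc{AddBucket} only pops the top, so the suffix of the stack beneath any surviving entry is frozen from the moment that entry was most recently pushed. Once the invariant is in hand, balance reduces cleanly to the uniform-hashing conditional-symmetry argument above.
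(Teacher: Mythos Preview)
Your proof is correct and takes a genuinely different route from the paper's. The paper argues by induction on the sequence of operations with \emph{balance itself} as the inductive invariant: the base case has $\mathcal{W}=\mathcal{A}$; the removal step invokes the already-proved minimal-disruption theorem to say that only keys previously mapped to $b$ change, and these are rehashed uniformly by $H_{\mathcal{W}_b}$ over the new working set; the addition step simply observes that popping the stack reproduces an earlier stack state $\mathcal{R}_j$ with $j<i$, so balance is inherited from the induction hypothesis. You instead induct on operations to establish a \emph{structural} invariant about the stored snapshots, namely $\mathcal{W}_{b_i}=\mathcal{A}\setminus\{b_1,\ldots,b_i\}$, and then analyze the output distribution of \textsc{GetBucket} in one shot via a conditional-symmetry argument on the trajectory of nested hash sets. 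What your approach buys is self-containment (no appeal to the minimal-disruption theorem) and an explicit description of the chain of sets the algorithm walks through; what the paper's approach buys is brevity once minimal disruption is in hand, since each inductive step becomes essentially a one-line observation rather than a fresh probability calculation.
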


\begin{proof} 
We prove that given a possible sequence of operations, where an operation can be either a bucket removal or a bucket addition, balance holds in the initial state and after every operation. By the definition of \textsc{GetBucket($k$)} in Algorithm~\ref{alg:AnchorHash}, its returned bucket for a specific key depends only on the stack $\mathcal{R}$ of removed buckets. Denote by $\mathcal{R}_j$ the stack after operation $j$ and denote $\mathcal{R}_0=\emptyset$. We refer to the initial state $\mathcal{R}_0$ as the state after operation number 0. Our proof is by induction on the number of operations.

\T{Basis: initial state.} In this case, since $\mathcal{R}_0 = \emptyset$ we have $\mathcal{W}{=}\mathcal{A}$. According to the uniform hashing assumption, for every $k$, $H_{\mathcal{A}}(k)$ is a uniform random variable over $\mathcal{W}$.

\T{Induction hypothesis.} Assume that balance holds after every operation $j \leq i$; namely for every stack $\mathcal{R}_j$ such that $j \leq i$.

\T{Inductive step.} We now prove that balance holds after operation $i+1$, which can either be a bucket removal (unless $\vert \mathcal{W} \vert =1$) or a bucket addition (unless $\mathcal{R}_i = \emptyset$).

\textit{Bucket removal.} Consider a newly removed bucket $b$. After $b$'s removal, according to Theorem~\ref{thm:mindisp} (minimal disruption), only keys that were mapped to $b$ are remapped. These are remapped using $H_{\mathcal{W}_b}$, which, by the uniform hashing assumption, assigns each of them with an equal probability to the members of $\mathcal{W}_b$, independently from their previous mappings.

\textit{Bucket addition.}  By the definition of \textsc{AddBucket}() in Algorithm~\ref{alg:AnchorHash}, $\mathcal{R}_{i+1}$ is obtained by popping the bucket at the top of the stack $\mathcal{R}_{i}$; therefore $\mathcal{R}_{i+1} = \mathcal{R}_{j}$, for some $j<i$ (for example, if the previous operation was a bucket removal, then $\mathcal{R}_{i+1} = \mathcal{R}_{i-1}$, as an addition is an ``undo'' of removal). Thus, by the induction hypothesis, balance holds after operation $i+1$.
\end{proof}

%%%%%%%%%%%%%%%%%%%%%%%%%%%%%%%%%%%%%%%%%%%%%%%%%%%%%%%%%%%%%%%%%%%%%%%%%%%%%%%%%

We now turn to providing a strong theoretical guarantee on the run-time complexity of \textsc{GetBucket($k$)}, which explains why \name is able to process keys at a high rate at scale. 

\begin{theorem}[Computational complexity]\label{comp_complexity} Fix $\mathcal{A}$, $\mathcal{W}$ and $\mathcal{R}$ such that  $\vert \mathcal{A} \vert{=}a$ and $\vert \mathcal{W} \vert{=}w$.
For a key $k$, denote by $\tau$ the number of hash operations performed by \textsc{GetBucket($k$)}. 
Then:
\begin{enumerate}[leftmargin=*]
\item The average of $\tau$ 
is upper-bounded by $1+\ln(\frac{a}{w}).$
\item The standard deviation of $\tau$ is upper-bounded by  $\sqrt{\ln(\frac{a}{w})}.$
\end{enumerate}
\end{theorem}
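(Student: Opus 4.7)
The plan is to recast \textsc{GetBucket($k$)} as a walk on ``levels.'' Writing $\mathcal{R}=(b_1,\ldots,b_m)$ from the bottom to the top of the stack, so $m=a-w$, an easy induction on the history of \textsc{AddBucket} and \textsc{RemoveBucket} operations shows that the stored working set satisfies $\mathcal{W}_{b_i}=\mathcal{A}\setminus\{b_1,\ldots,b_i\}$, and in particular $|\mathcal{W}_{b_i}|=a-i$. Call the initial hash into $\mathcal{A}$ ``level $0$'' and the hash into $\mathcal{W}_{b_i}$ ``level $i$.'' By the uniform hashing assumption, from level $i$ the algorithm terminates (hits a working bucket) with probability $w/(a-i)$, and otherwise jumps to level $i'$, for each $i'\in\{i+1,\ldots,m\}$, with probability $1/(a-i)$.

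Introducing the indicators $\xi_i=\mathbf{1}[\text{level }i\text{ is visited}]$ for $i\ge 1$, each visit contributes exactly one hash and each level is visited at most once, so $\tau = 1+\sum_{i=1}^{m}\xi_i$. Conditioning on the (unique) predecessor level of $i$ gives the recurrence $p_i:=\mathbb{P}(\xi_i=1)=\sum_{j=0}^{i-1}p_j/(a-j)$ with $p_0=1$. Using the partial-fraction identity $\tfrac{1}{(a-j+1)(a-j)}=\tfrac{1}{a-j}-\tfrac{1}{a-j+1}$, the sum telescopes, and induction yields the closed form $p_i=1/(a-i+1)$ for $i\ge 1$. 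Hence
\begin{equation*}
\E[\tau] \;=\; 1+\sum_{i=1}^{m}\frac{1}{a-i+1} \;=\; 1+\sum_{k=w+1}^{a}\frac{1}{k} \;\leq\; 1+\int_{w}^{a}\frac{dx}{x} \;=\; 1+\ln(a/w),
\end{equation*}
where the inequality is the standard integral comparison $1/k\le\int_{k-1}^{k}dx/x$.

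For the variance, I would use the Markov property of the level walk to write $\mathbb{P}(\xi_i=1,\xi_j=1)=p_i\cdot q_{i,j}$ for $i<j$, where $q_{i,j}$ is the probability of ever visiting $j$ when starting at $i$. The same telescoping argument applied to the recurrence $q_{i,j}=\sum_{k=i}^{j-1}q_{i,k}/(a-k)$ with $q_{i,i}=1$ gives $q_{i,j}=1/(a-j+1)=p_j$; hence $\mathrm{Cov}(\xi_i,\xi_j)=0$ for all $i\neq j$. The variance therefore collapses to $\Var(\tau)=\sum_{i=1}^{m}p_i(1-p_i)\leq\sum_{k=w+1}^{a}1/k\leq \ln(a/w)$, and $\sqrt{\Var(\tau)}\le\sqrt{\ln(a/w)}$ as claimed.

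The main obstacle is pinning down the closed form $p_i=1/(a-i+1)$; both the mean bound and the Markov-property identity $q_{i,j}=p_j$ then follow from essentially the same telescoping computation. A minor bookkeeping step is to confirm that $\mathcal{W}_{b_i}=\mathcal{A}\setminus\{b_1,\ldots,b_i\}$ persists regardless of how additions and removals were interleaved, which is immediate from the algorithm since \textsc{AddBucket}() only pops from the top of $\mathcal{R}$ and discards the corresponding stored set, leaving the level structure of the remaining stack intact.
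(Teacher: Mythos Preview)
Your proof is correct and takes a genuinely different route from the paper. The paper argues via moment generating functions: it sets up a recursion for $\phi_i(s)=\E[e^{s\tau_i}]$ (where $\tau_i$ is the remaining number of hashes once the loop is entered at the $i$th-from-last removed bucket), solves it to the product form $\phi_{a-w}(s)=e^s\prod_{k=w+1}^{a}\tfrac{k-1+e^s}{k}$, and then differentiates twice at $s=0$ to read off the mean and variance. Your approach is more combinatorial: you decompose $\tau-1$ into level-visit indicators, obtain the marginals $p_i=1/(a-i+1)$ by a telescoping recursion, and handle the variance by showing the indicators are pairwise uncorrelated via the identity $q_{i,j}=p_j$. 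The two arguments are in fact equivalent---the paper's product formula says precisely that $\tau-1$ is distributed as a sum of \emph{independent} Bernoulli$(1/k)$ variables for $k=w{+}1,\ldots,a$, which is exactly the structure your indicator decomposition uncovers---but neither proof states this explicitly. Your route is more elementary (no generating functions, no calculus) and makes the independence structure transparent; the MGF route, in exchange, delivers the full distribution of $\tau$ at once and would extend immediately to higher moments or exponential tail bounds.
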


\begin{proof} 

Once \textsc{GetBucket($k$)} is invoked, we repeatedly hash $k$ into decreasing subsets of $\mathcal{A}$ until hitting a working bucket. The number of hash operations is 1 plus the number of iterations in the while loop, which is entered only if $b = H_{\mathcal{A}}(k) \not\in \mathcal{W}$, i.e., bucket $b$ was removed. 
Consider a fixed sequence of removals $\mathcal{R} =\brac{r_{a-w-1} \leftarrow \ldots \leftarrow r_{0}}$; namely, $r_{a-w-1}$ is the first removed bucket and $r_{0}$ is the last removed bucket (i.e., $\mathcal{W} = \mathcal{W}_{r_0}$). 

Let $\tau_i$ denote the number of remaining iterations \emph{after} the loop is entered with $b=r_i$. Let $b_i$ denote $H_{\mathcal{W}_{r_i}}(k)$; if $b_i \in \mathcal{W}$ then the loop terminates and $\tau_i=1$. If $b_i \notin \mathcal{W}$, then $b_i=r_j$ for some $j<i$ ($r_i$ and all earlier removals are not in $\mathcal{W}_{r_i}$). Then, by the uniform hashing assumption, $\tau_i$ has the same distribution as $1+\tau_j$.

For $i=0$, since $\mathcal{W} = \mathcal{W}_{r_0}$, we have $b_0 \in \mathcal{W}$, thus $\tau_0 =1$. With $\tau_0,\ldots,\tau_{a-w-1}$ at hand, for ease of exposition, we also define $\tau_{a-w}=\tau$ and $b_{a-w}=b$. We use these notations and observations to derive a recursive formula and find a closed-form expression for the moment generating function (MGF) of $\tau$. We then use it to find the first two moments of $\tau$. 

For $0 \leq i \leq a-w$, define
\begin{equation}\label{eq:performance:def}
\phi_i(s) = \mathbb{E}[ e^{s\tau_i}].
\end{equation}
Then, by the law of total expectation, for $i>0$,
\begin{align}\label{eq:expanding 1}
\phi_i(s)&=\mathbb{P}(b_i {\in} \mathcal{W})\mathbb{E}[ e^{s\tau_i} {\mid} b_i {\in} \mathcal{W}]+ \sum_{j=0}^{i-1}\mathbb{P}(b_i{=}r_j)\mathbb{E}[ e^{s\tau_i} {\mid} b_i{=}r_j]\cr
&=\frac{w}{w+i}\mathbb{E}[ e^{s\tau_i} {\mid} b_i {\in} \mathcal{W}]+ \frac{1}{w+i}\sum_{j=0}^{i-1}\mathbb{E}[ e^{s\tau_i} {\mid} b_i{=}r_j].
\end{align}
First, if $b_i\in \mathcal{W}$, the loop terminates after a single hash calculation, \ie $\tau_i=1$. Thus
\begin{equation}\label{eq:expanding 2}
\mathbb{E}[ e^{s\tau_i} {\mid} b_i {\in} \mathcal{W}]=e^s.
\end{equation}
Second, recall that the distribution of $\tau_i$ conditioned on $b_i=r_j$ follows the same distribution as $1+\tau_{j}$. Therefore 
\begin{equation}\label{eq:expanding 3}
\begin{split}
\mathbb{E}[ e^{s\tau_i} \mid b_i=r_j] 
&= \mathbb{E}[ e^{s(1+\tau_{j})}]\\
& =  e^s\mathbb{E}[ e^{s\tau_{j}}]=e^s\phi_{j}(s),
\end{split}
\end{equation}
Substituting \eq{eq:expanding 2} and \eq{eq:expanding 3} in \eq{eq:expanding 1} yields
\begin{equation}\label{eq:performance:rec}
\phi_i(s) = \frac{w\cdot e^s}{w+i} + \frac{e^s}{w+i} \cdot \sum_{j=0}^{i-1}\phi_{j}(s).
\end{equation}

Now that we have a recursive formula for $\phi_i(s)$, we are able to calculate its closed-form expression. 
For $i>0$, rearranging \eq{eq:performance:rec} yields 
\begin{equation}\label{eq:performance:rec:a}
\begin{split}
\frac{w+i}{e^s} \cdot \phi_i(s) & = w+ \sum_{j=0}^{i-1}\phi_{j}(s) \\ 
&= w+ \sum_{j=0}^{i-2}\phi_{j}(s) + \phi_{i-1}(s) \\
& = \frac{w+i-1}{e^s} \cdot \phi_{i-1}(s) + \phi_{i-1}(s) \\
&= \frac{w+i-1+e^s}{e^s} \cdot \phi_{i-1}(s),
\end{split}
\end{equation}
where the third equality is derived by using the first equality but with $i-1$ instead of $i$.
Therefore, 
\begin{equation}\label{eq:performance:rec:d}
\phi_i(s) = \phi_{i-1}(s) \cdot \frac{w+i-1+e^s}{w+i}.
\end{equation}
Now, using \eq{eq:performance:rec:d} and the stopping condition $\phi_0(s)=e^s$, we obtain for $i>0$,
\begin{equation}\label{eq:mgf_indep}
\phi_i(s)=e^s\prod_{j=1}^{i} \Big(\frac{w+j-1+e^s}{w+j}\Big),
\end{equation} 
Taking the logarithm and then differentiating with respect to $s$ yields
\begin{equation}\label{eq:performance:derivative:a}
\frac{\phi'_{i}(s)}{\phi_{i}(s)}=1+\sum_{j=1}^{i}\Big(\frac{e^s}{w+j-1+e^s}\Big).
\end{equation}
By \eq{eq:performance:def}, $\phi_i(0)=1$. Hence, substituting $s=0$ in \eq{eq:performance:derivative:a} yields
\begin{equation}
\mathbb{E}[\tau_i] = \phi'_{i}(0)= 1 + \sum_{j=1}^{i}\frac{1}{w+j},
\end{equation}
and therefore
\begin{align*}
\mathbb{E}[\tau] = \mathbb{E}[\tau_{a-w}] &= \phi'_{a-w}(0) = 1 + \sum_{j=1}^{a-w}\frac{1}{w+j}\cr
&\leq 1+\int_{w}^{a}\frac{1}{x}dx=1+\ln\Big(\frac{a}{w}\Big).
\end{align*}
Now, to obtain the bound on the standard deviation, we take the derivative with respect to $s$ in \eq{eq:performance:derivative:a} and obtain
\begin{equation*}\label{eq:second derivative}
\frac{\phi''_{i}(s)\phi_{i}(s) -(\phi'_{i}(s))^2}{(\phi_{i}(s))^2}=   
\sum_{j=1}^{i}\Big(\frac{e^s(w+j-1+e^s)-e^{2s}}{(w+j-1+e^s)^2}\Big).
\end{equation*}
Setting $i=a-w$, $s=0$ and using $\phi_{a-w}(0)=1$ yields
\begin{align*}
\mbox{Var}(\tau) &= \mbox{Var}(\tau_{a-w}) = \phi''_{a-w}(0)-(\phi'_{a-w}(0))^2 \cr
& = \sum_{j=1}^{a-w}\frac{w+j-1}{(w+j)^2} \le \sum_{j=1}^{a-w}\frac{1}{w+j} \le \ln(\frac{a}{w}).
\end{align*}
Thus the standard deviation is upper bounded by $\sqrt{\ln(\frac{a}{w})}$. Note that the bounds do not depend on the removal sequence we fixed. This concludes the proof.
\end{proof}

%=========================================================================
%  Anchorhash implementation
%=========================================================================

\section{\name implementation} \label{sec:anchor:implementation}

Equation \eqref{eq:mgf_indep}, from which the result of Theorem \ref{comp_complexity} is derived,  implies that the distribution of the number of iterations (and by that also the number of hash operations) of \textsc{GetBucket($k$)} is \emph{independent} of the implementation. However, the implementation does determine the amount of used memory and how many calculations and memory accesses are performed during each iteration of \textsc{GetBucket($k$)}. 

Specifically, each such iteration requires choosing a bucket uniformly at random from a known set (\ie $b \gets H_{\mathcal{W}_b}(k)$) and checking if this bucket is working (\ie $b \not \in \mathcal{W}$). The most challenging part is finding an efficient way to hold these different sets (\ie $\{\mathcal{W}_b \,\big|\, b \in \mathcal{R}\}$). 

In the following, we first describe in detail the different components of \name implementation. Then, we present three distinct implementations of holding $\{\mathcal{W}_b \,\big|\, b \in \mathcal{R}\}$ that achieve different memory-computation complexity trade-offs which are summarized in Table \ref{tab:anchor_evo}.

\T{Anchor representation.} We use an integer array $A$ of size $a$ to represent the Anchor. Each bucket ${b \in \set{0,1,\ldots,a-1}}$ is represented by $A[b]$ that either equals $0$ if $b$ is a working bucket (\ie $A[b]=0$ if $b \in \mathcal{W}$), or else equals the size of the working set just after its removal (\ie $A[b]=\vert\mathcal{W}_b\vert$ if $b \in \mathcal{R}$). 

\U{Example.} Considering again the example in \fig{fig:main_example:d}, we have
$$
\overset{^{\footnotesize\mbox{$b$:}}}{\mbox{$A[b]$:}} \quad \overset{0}{\fbox{0}} \sep \overset{1}{\fbox{4}} \sep \overset{2}{\fbox{0}} \sep \overset{3}{\fbox{0}} \sep \overset{4}{\fbox{0}} \sep \overset{5}{\fbox{5}} \sep \overset{6}{\fbox{6}}
$$ 
By examining this array we can determine that buckets $0$, $2$, $3$, and $4$ are working, and buckets $1,5,6$ are removed, with $\vert\mathcal{W}_1\vert=4$, $\vert\mathcal{W}_5\vert=5$ and $\vert\mathcal{W}_6\vert=6$. 

\T{Hashing.} Denote $h_b(k) \equiv \mbox{hash}(b,k) \bmod A[b]$. We are using $b$ as "salt" in the hash function to make sure $\{h_b()\}_{b=0,1,\ldots}$ are independent (as assumed by the uniform hashing assumption). To implement $\mbox{hash}(b,k)$ efficiently, recent software-based solutions such as xxHash~\cite{xxhash_small_keys} and hardware-supported hashing such as crc32~\cite{singhal2008inside} can be used.

\T{Removed buckets.} 
\name saves the removed buckets in a LIFO order for possible future bucket additions. Accordingly, we use an efficient implementation of a stack data structure $R$ to hold the removed buckets. 

\U{Example.} In the example of \fig{fig:main_example:d}, $R$ looks like: 
$$\framebox(13,13){6} \framebox(13,13){5} \framebox(13,13){1} \framebox(13,13){ } \framebox(13,13){ } \framebox(13,13){ } \, \colorbox{white}{\makebox(11,11){$\LARGE\boldsymbol\leftrightarrows$}}$$

\T{Decreasing subsets.} For each removed bucket $b$, we need an efficient way of representing $\mathcal{W}_b$ and calculating $H_{\mathcal{W}_b}(k)$. 
For clarity, we tackle this challenge in stages: we begin with a \emph{naive} implementation, which we successively improve to implementations with a \emph{partial} then \emph{minimal} memory usage.

\subsection{Naive implementation} 

A naive approach to representing $\set{\mathcal{W}_b \,|\, b \in \mathcal{R}}$ is using a key-value store, $\mbox{KV}$,  that holds the pairs ${\{(b,\mathcal{W}_b) \, | \, b \in \mathcal{R}\}}$, where the \emph{key} is a removed bucket $b$ and the \emph{value} is $\mathcal{W}_b$, stored in $\mbox{KV}[b]$ as an array. 
This way, implementing $H_{\mathcal{W}_b}(k)$ simply translates to $H_{\mathcal{W}_b}(k) \equiv \mbox{KV}[b][h_b(k)]$. 

\U{Example.} In the example in \fig{fig:main_example:d}, $\mbox{KV}[1]$ looks like:
\begin{equation*}
\mathcal{W}_1 = \{ 0,2,3,4 \}, \qquad \overset{^{\footnotesize\mbox{$h_1(k)$:}}}{\mbox{KV}[1]}: \quad \overset{0}{\fbox{0}} \sep \overset{1}{\fbox{2}} \sep \overset{2}{\fbox{3}}  \sep \overset{3}{\fbox{4}}
\end{equation*}

Unfortunately, albeit simple, this approach is not scalable, as it requires to maintain an array of size $|\mathcal{W}_b|$ for each removed bucket $b$, incurring a large memory footprint of $\Theta(|\mathcal{A}|+|\mathcal{A}||\mathcal{R}|)$. That is, arrays of sizes $|\mathcal{A}|-1,|\mathcal{A}|-2,\ldots,|\mathcal{A}|-|\mathcal{R}|$ are maintained for the members of $|\mathcal{R}|$.
Since we also use an array of size $|\mathcal{A}|$ for the Anchor representation, the total memory footprint is given by  
\begin{align*}
 |\mathcal{A}| + \sum_{i=1}^{|\mathcal{R}|} \big(|\mathcal{A}|-i\big)&=|\mathcal{A}|+|\mathcal{A}||\mathcal{R}|-0.5|\mathcal{R}|(|\mathcal{R}|+1)  \cr
 &\geq |\mathcal{A}|+|\mathcal{A}||\mathcal{R}|-0.5|\mathcal{A}|(|\mathcal{R}|+1)\cr
 &=0.5(|\mathcal{A}|+|\mathcal{A}||\mathcal{R}|),
\end{align*}
and since 
\begin{align*}
 |\mathcal{A}| + \sum_{i=1}^{|\mathcal{R}|} \big(|\mathcal{A}|-i\big)&=|\mathcal{A}|+|\mathcal{A}||\mathcal{R}|-0.5|\mathcal{R}|(|\mathcal{R}|+1)  \cr
 &\le |\mathcal{A}|+|\mathcal{A}||\mathcal{R}|,
\end{align*}
the total memory footprint of the naive implementation is $\Theta(|\mathcal{A}|+|\mathcal{A}||\mathcal{R}|)$. Upon a bucket addition, the complexity accounts for $\Theta(|\mathcal{W}|)$ (\ie adding $\mathcal{W}_b$ to $\mbox{KV}$).

\subsection{Reduced-memory implementation}

\T{Non-fixed points.} Consider again the naive implementation. Recall that all of the theoretical properties of \name are \emph{independent} of the exact bucket order within the sets $\{ \mathcal{W}_b \,|\, b \in \mathcal{R}\}$. Also, for any two \emph{consecutively} removed buckets $b_1$ and $b_2$, the sets $\mathcal{W}_{b_1}$ and $\mathcal{W}_{b_2}$ only differ by a single bucket.

We want to leverage these properties to reduce the memory footprint of \name and accelerate its performance. Accordingly, we seek to \emph{minimize the number of non-fixed point entries} in the members of $\{ \mathcal{W}_b \,|\, b \in \mathcal{R}\}$, which we define as entries that respect $\mathcal{W}_b[h] \neq h$. This way we do not need to remember the full arrays, but only the \emph{difference} between the initial order of buckets and each member of $\set{\mathcal{W}_b \,|\, b \in \mathcal{R}}$, \ie the \emph{non-fixed points}.

\U{Example.} Recall the example in \fig{fig:main_example:d}. 
In this example, the naive approach holds three arrays: $\mbox{KV}[6]$, $\mbox{KV}[5]$, and $\mbox{KV}[1]$. 
Our goal is to minimize the number of non-fixed point entries between the initial order of buckets $\set{0,1,2,\ldots,6}$ and the order of buckets in the members of $\{ \mathcal{W}_b \,|\, b \in \set{1,5,6}$\}. For example, to obtain the desired order for $\mathcal{W}_1=\brac{0,2,3,4}$ and minimize the difference with $ \footnotesize \fbox{0} \sep \fbox{{1}} \sep \fbox{2} \sep \fbox{3},$ we simply use $ \footnotesize \fbox{0} \sep \fbox{\textbf{4}} \sep \fbox{2} \sep \fbox{3},$ \ie
 take bucket $4$ which is the last element in $\mbox{KV}[5]$, and put it instead of the removed bucket $1$. This yields

\begin{alignat}{2}\label{eq:naive_kv_eff}
 \mathcal{A} &= \{ 0,1,2,3,4,5,6 \}, & \mbox{Init}: & \quad \overset{0}{\fbox{0}} \sep \overset{1}{\fbox{1}} \sep \overset{2}{\fbox{2}} \sep \overset{3}{\fbox{3}} \sep \overset{4}{\fbox{4}} \sep \overset{5}{\fbox{5}} \sep \overset{6}{\fbox{6}}  \nonumber \\
 \mathcal{W}_6 &= \{ 0,1,2,3,4,5 \}, & \mbox{KV}[6]: & \quad\fbox{0} \sep \fbox{1} \sep \fbox{2} \sep \fbox{3} \sep \fbox{4} \sep \fbox{5} \nonumber \\
 \mathcal{W}_5 &= \{ 0,1,2,3,4 \}, & \mbox{KV}[5]: & \quad\fbox{0} \sep \fbox{1} \sep \fbox{2} \sep \fbox{3} \sep \fbox{4} \nonumber \\
 \mathcal{W}_1 &= \{ 0,2,3,4 \}, &\mbox{KV}[1]: & \quad\fbox{0} \sep \fbox{\textbf{4}} \sep \fbox{2} \sep \fbox{3}  
\end{alignat}

Examining \eq{eq:naive_kv_eff} reveals that instead of remembering all three arrays, we can just remember that $\mbox{KV}[1][1]{=}4$ (recall that $A$ provides the length of each array).  Namely, all other elements are simply fixed points. Each time we calculate $H_{\mathcal{W}_b}(k)$, it equals $h_b(k)$ without the need to access any data structure. The only exception is when an entering key hits bucket $1$ and then hashes to $1$ again (\ie calculating $H_{\mathcal{W}_1}(k)$ yields $h_1(k)=1$). For this specific case, we need to remember that we hit bucket $\mbox{KV}[1][1]{=}4$ instead of 1.

Now, assume that in this state bucket $0$ is removed.  Similarly, the desired ordering for $\mathcal{W}_0$ is obtained by taking the last element in $\mbox{KV}[1]$, which is bucket $3$, and putting it instead of the removed bucket $0$. This yields
\begin{equation}\label{eq:naive_kv_eff_2}
\mathcal{W}_0 = \{ 2,3,4 \}, \qquad \overset{^{\footnotesize\mbox{$h_0(k)$:}}}{\mbox{KV}[0]}: \quad \overset{0}{\fbox{\textbf{3}}} \sep \overset{1}{\fbox{\textbf{4}}} \sep \overset{2}{\fbox{2}}
\end{equation}
Again, we only need to store $\mbox{KV}[0][0]{\gets}3$ and $\mbox{KV}[0][1]{\gets}4$, since working bucket $2$ is a fixed point and its location  is identical to its location in the initial ordering. To summarize, in this example we only need to remember 3 elements (the bold numbers in \eq{eq:naive_kv_eff} and \eq{eq:naive_kv_eff_2}) instead of the original $6+5+4+3=18$.

\T{Individual \mbox{KV} entries.} To leverage  this solution with reduced memory requirements, we stop organizing the key-value store using arrays. Instead of keeping an entry $\mbox{KV}[b][h]$, we keep an entry $\mbox{KV}[(b,h)]$ where the pair $(b,h)$ is the key. This can be efficiently implemented by simply 
concatenating $b$ and $h$ to form a single key. For example, in \eq{eq:naive_kv_eff_2}, instead of using $\mbox{KV}[0][1]{=}4$ with an array, we use $\mbox{KV}[(0,1)]{=}4$. Thus,
\begin{equation*}
H_{\mathcal{W}_b}(k) =  
\begin{cases}
\mbox{KV}[(b, h_b(k))] &\quad\text{if the entry exists} \\
h_b(k) &\quad\text{otherwise}
\end{cases}
\end{equation*}

To efficiently determine the desired order within $\mathcal{W}_b$ for a newly removed bucket $b$ and the exact elements that we need to store, we maintain two additional arrays: (1)~$W$, which always contains the \emph{current} set of working buckets in their desired order, and (2)~$L$, which stores for each bucket its \emph{most recent} location in $W$. Both arrays are initialized identically: $W[b]=L[b]=b \quad\forall b \in \set{0,1,\ldots,a-1}.$ For instance, after bucket $1$ is removed (\ie last array in \eq{eq:naive_kv_eff}), $W$ and $L$ obtain the following form:
$$
\overset{^{\footnotesize\mbox{$b$:}}}{\mbox{$W[b]$:}} \quad \overset{0}{\fbox{0}} \sep \overset{1}{\fbox{4}} \sep \overset{2}{\fbox{2}} \sep \overset{3}{\fbox{3}} \sep \overset{4}{\fbox{4}} \sep \overset{5}{\fbox{5}} \sep \overset{6}{\fbox{6}}
%$$
\quad
%$$
\overset{^{\footnotesize\mbox{$b$:}}}{\mbox{$L[b]$:}} \quad \overset{0}{\fbox{0}} \sep \overset{1}{\fbox{1}} \sep \overset{2}{\fbox{2}} \sep \overset{3}{\fbox{3}} \sep \overset{4}{\fbox{1}} \sep \overset{5}{\fbox{5}} \sep \overset{6}{\fbox{6}}.
$$
That is, bucket $4$ replaced bucket $1$ in $W$ and the most recent location of bucket $4$ updated to index $1$. 
Note that the removals of buckets $5$ and $6$ did not require any updates in both $W$ and $L$. With this example at hand, we now detail the update rules for $W$ and $L$ upon bucket removals and additions.

\T{Removal.} Assume a newly removed bucket $b$ and let $N=\vert \mathcal{W}_b \vert$. Then in $W$, $b$ is replaced by the last positioned working bucket (\ie $W[N]$), and its most recent location (\ie $L[b]$) is correspondingly updated in $L$. This yields 
$$W[L[b]] \gets W[N], \,\, L[W[N]] \gets L[b].$$ 

Now, we use the updated array $W$ to determine which entries to store in \mbox{KV}: for all $h \in \set{0,1,\ldots,|\mathcal{W}_b|-1}$ such that $W[h] \neq h$, we store $\mbox{KV}[(b,h)] \gets W[h]$.
 
\T{Addition.} Upon bucket addition, we need to \emph{restore} the state prior to the last removal. To do so, we delete the corresponding entries in \mbox{KV} by the same rule we used to remember them. Then, we restore $W$ and $L$ to their previous state using: 
$$L[W[N]] \gets N, \,\, W[L[b]] \gets b.$$
For example, given the state in \eq{eq:naive_kv_eff}, if we now add back bucket $1$ then we simply restore $W$ and $L$ to their initial state, since using the rules yields $L[4] \gets 4$ and  $W[1] \gets 1$.

\T{Complexity.} In the worst case, each consecutive removed bucket may require one additional entry in addition to the entries required by the previously removed bucket. Accordingly, this method for resolving $H_{\mathcal{W}_b}(k)$ results in a memory footprint of 
$1+2+\ldots+|\mathcal{R}|=0.5|\mathcal{R}|(|\mathcal{R}|+1)$, together with three arrays of size (at most) $|\mathcal{A}|$ to represent $A$, $W$ and $L$.
The total memory footprint is therefore $\Theta(|\mathcal{A}|+|\mathcal{R}|^2)$. Updating $\mbox{KV}$ upon a bucket addition or removal incurs a complexity of $O(|\mathcal{W}|)$.

\subsection{Minimal-memory implementation}

While the previous implementation may be sufficient for systems with a small $|\mathcal{R}|$ value, we present our final implementation of \name that results in a remarkably low-memory footprint, negligible response time to changes and high key lookup rate. Specifically, we show how to efficiently calculate $\mbox{KV}[(b,h)]$ for all $(b,h)$ pairs, using a single array that replaces the key-value store functionality.

\T{Successors.} 
To do so, for each removed bucket $b$, we are only storing its \emph{successor}, \ie the bucket that replaced it in $W$. That is, we define an array $K$, such that its entry for each removed bucket $b$ is $K[b]=\mbox{KV}[(b,L[b])]$. We initiate $K[b] \gets b \,\, \forall b \in \set{0,1,\dots,a-1}$, as initially a working bucket $b$ appears at $W[b]$ (\ie replaces itself). For instance, in the example of \eq{eq:naive_kv_eff}, we just remember that bucket 4 replaced bucket 1 (\ie $K[1]=4$), and later in the example of \eq{eq:naive_kv_eff_2}, that bucket 3 replaced bucket 0 (\ie $K[0]=3$). This yields

\begin{align}\label{eq:first_k_exp}
\overset{^{\mbox{$b$:}}}{K[b]} &=  \quad \overset{0}{\fbox{\textbf{3}}} \sep \overset{1}{\fbox{\textbf{4}}} \sep \overset{2}{\fbox{2}} \sep \overset{3}{\fbox{3}} \sep \overset{4}{\fbox{4}} \sep \overset{5}{\fbox{5}} \sep \overset{6}{\fbox{6}}, \cr
A[b] &=  \quad \fbox{3} \sep \fbox{4} \sep \fbox{0} \sep \fbox{0} \sep \fbox{0} \sep \fbox{5} \sep \fbox{6}. 
\end{align}

We next show that we can exploit this information to reconstruct the individual \mbox{KV} entries used by the reduced-memory footprint implementation. Our key observation is that when trying to resolve $\mbox{KV}[(b,h)]$, we are actually searching for $W[h]$ just after $b$'s removal. Therefore, we can trace through the history of $W[h]$, until we reach $\mathcal{W}_b[h]$. We start from $h$, which is the initial value of $W[h]$. When bucket $h$ was removed, $W[h]$ was updated to its successor, \ie $K[h]$. Similarly, when $K[h]$ was removed, it was updated to its successor as well, \ie $K[K[h]]$, and so on. Accordingly, we iteratively set $h \gets K[h]$, until we reach the first working bucket at $W[h]$ just after $b$'s removal. We determine the stopping condition by looking at the sizes of $\mathcal{W}_b$ and $\mathcal{W}_h$: when $A[b] \geq A[h]$ we know that $K[h]$ was working when $b$ was removed, and can terminate. 

\U{Example.} Consider the example in \eq{eq:first_k_exp}.  

If in this state we further remove bucket $4$, 
we obtain $\mathcal{W}_4 = \{ 2,3 \}$ and 
\begin{align}\label{eq:first_k_exp_2}
\overset{^{\mbox{$b$:}}}{K[b]} &=  \quad \overset{0}{\fbox{\textbf{3}}} \sep \overset{1}{\fbox{\textbf{4}}} \sep \overset{2}{\fbox{2}} \sep \overset{3}{\fbox{3}} \sep \overset{4}{\fbox{\textbf{2}}} \sep \overset{5}{\fbox{5}} \sep \overset{6}{\fbox{6}},
\cr
A[b] &=  \quad \fbox{3} \sep \fbox{4} \sep \fbox{0} \sep \fbox{0} \sep \fbox{2} \sep \fbox{5} \sep \fbox{6}. 
\end{align}
As an example, we show how the arrays $K$ and $A$ can be used to calculate $\mbox{KV}([5,1])$, $\mbox{KV}([1,1])$, and $\mbox{KV}([4,1])$. Recall that $A[b]$ holds $\vert \mathcal{W}_b \vert$ and $\mbox{KV}([b,h])$ holds $H_{\mathcal{W}_b}(k)$, where $h = h_b(k)$. 
Also, if $A[h] < A[b]$, then $h \in \mathcal{W}_b$ ($h$ was not removed before $b$) and $\mbox{KV}([b,h]) = h$.

We use $A$ to check if $\mbox{KV}([b,h]) = h$. In our case, $\mbox{KV}([5,1]) = 1$, but $\mbox{KV}([4,1]) \neq 1$ and $\mbox{KV}([1,1]) \neq 1$, since $A[4] < A[1] < A[5]$. In other words, we know from $A$ that $1 \in \mathcal{W}_5$, but $1 \not\in \mathcal{W}_4$ (and, obviously, $1 \not\in \mathcal{W}_1$). We use $K$ to calculate $\mbox{KV}([1,1])$; since $K[1] = 4$ and $4 \in \mathcal{W}_1$ ($A[4]<A[1]$), we conclude that $\mbox{KV}([1,1])=4$. 
Similarly, we use $K$ twice to calculate $\mbox{KV}([4,1])$; since $K[1] = 4$ and $4 \not\in \mathcal{W}_4$, we examine $K[K[1]] = 2$. Since $2 \in \mathcal{W}_4$ ($A[2]<A[4]$), we conclude that $\mbox{KV}([4,1]) = 2$.

All three of the above calculations may be needed to find a working bucket  by \textsc{GetBucket($k$)}. For example, consider a key $k$ for which $H_{\mathcal{A}}(k) =5$ and $h_5(k)= h_1(k)= h_4(k) = 1$. First we examine bucket $5$ and since $5 \not\in \mathcal{W}$ ($A[5]>0$) we rehash to $H_{\mathcal{W}_5}(k) = \mbox{KV}([5,h_5(k)]) = \mbox{KV}([5,1]) =1$. Since $1 \not\in \mathcal{W}$, we rehash to $H_{\mathcal{W}_1}(k) = \mbox{KV}([1,h_1(k)]) = \mbox{KV}([1,1]) =4$. Since $4 \not\in \mathcal{W}$, we rehash yet again to $H_{\mathcal{W}_4}(k) = \mbox{KV}([4,h_4(k)]) = \mbox{KV}([4,1]) =2$. Finally, since $2 \in \mathcal{W}$ it can be returned as the bucket for key $k$.

\begin{algorithm}[!t]
	\footnotesize
    \begin{algorithmic}[1]
    
     	\Function{InitAnchor}{$a, w$}
          \State $A[b] \gets 0 \text{ for }b=0,1,\ldots, a{-}1$ \Comment{$|\mathcal{W}_b| \gets 0$ for $b \in \mathcal{A}$}
          \State $R \gets \emptyset$ \Comment{Empty stack} 
          \State $N \gets w$ \Comment{Number of initially working buckets}
   		  \State $K[b] \gets L[b] \gets W[b] \gets b \text{ for }b=0,1,\ldots, a-1$
          \For{$b = a{-}1$ \textbf{downto} $w$} \Comment{Remove initially unused buckets}
              \State $R.push(b)$
              \State $A[b] \gets b$
          \EndFor
        \EndFunction
        
    \\\hrulefill    
    
        \Function{GetBucket}{$k$}
        	\State $b \gets \mbox{hash}(k) \bmod a$
        	\While {$A[b] {>} 0$} \Comment{$b$ is removed}
        		\State $h \gets h_b(k)$
        		\Comment{$h \gets \mbox{hash}(b,k) \bmod A[b]$}
                \While {$A[h] \ge A[b]$} 
                \Comment{$\mathcal{W}_b[h] \not = h$, $b$ removed prior to $h$}
					\State $h \gets K[h]$ 
					\Comment{search for $W_b[h]$}
                \EndWhile                
               	\State $b \gets h$
               	\Comment{$b \gets H_{\mathcal{W}_b}(k)$}
        	\EndWhile
        	\State\Return $b$
        \EndFunction
        
       \\\hrulefill    

        \Function{AddBucket}{ }
			\State $b \gets R.pop()$
            \State $A[b] \gets 0$ \Comment{$\mathcal{W} \gets \mathcal{W} \cup \{b\}$, delete $\mathcal{W}_b$}
        	\State $L[W[N]] \gets N$
            \State $W[L[b]] \gets K[b] \gets b$
            \State $N \gets N+1$
            \State\Return $b$
        \EndFunction
        
    \\\hrulefill
    
        \Function{RemoveBucket}{$b$}      
            \State $R.push(b)$
        	\State $N \gets N-1$
            \State $A[b]  \gets  N$ \Comment{$\mathcal{W}_b \gets \mathcal{W} \backslash b$, $A[b] \gets |\mathcal{W}_b|$}
			\State $W[L[b]] \gets K[b] \gets W[N]$
            \State $L[W[N]] \gets L[b]$
        \EndFunction
       
    \end{algorithmic}   
	\normalsize
    \caption{ --- AnchorHash Implementation}
    \label{alg:AnchorHash_impl}
\end{algorithm}
%}%%%%%%%%%%%%%%%%%%%%%%%%%%%%%%%%DL

\T{Complexity.} \al{alg:AnchorHash_impl} provides the pseudo-code for \name's final array-based implementation. The memory footprint for this solution is $\Theta (|\mathcal{A}|)$ \emph{independently of the system state}--- \eg, independently of the number of removed buckets or of their identity. That is, we keep four arrays $A, L, W, K$ of size $|\mathcal{A}|$ and the stack $\mathcal{R}$. The update time upon a bucket removal or addition accounts for $O(1)$ operations and is negligible for any $\mathcal{A}$ and $\mathcal{R}$. Note that $\Theta (|\mathcal{A}|)$ is required to save resource details (\eg server IP addresses). 

While we already established bounds on the number of \emph{hash operations}, we now provide an upper bound on the average number of \emph{memory accesses} required by a key lookup when using our final minimal-memory implementation.

\begin{theorem}[Memory accesses]\label{comp_complexity_en} Assume \emph{random} removals. Let $\vert \mathcal{W} \vert=w$ and $\vert \mathcal{A} \vert=a$. 
Denote by $\xi$ the total number of memory accesses performed by \textsc{GetBucket($k$)} for a randomly chosen key $k$ when using the minimal-memory implementation. Then,
the average of $\xi$ is O\big($\para{1+\ln\para{\frac{a}{w}}}^2$\big). 
\end{theorem}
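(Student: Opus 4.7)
I would prove the bound by decomposing $\xi$ according to the outer iterations of \textsc{GetBucket} and applying Theorem~\ref{comp_complexity} twice --- once to the outer loop, and once to each inner successor-chain traversal. The first step is to write $\xi = c\tau + 2\sum_{i=1}^{\tau} L_i$, where $\tau$ is the number of outer iterations, $c=O(1)$ absorbs the constant number of memory reads per outer iteration (reading $A[b]$, $A[h]$, and computing $h_b(k)$), and $L_i$ counts the traversals $h \gets K[h]$ in the $i$-th inner \textbf{while} loop (each contributing two accesses, one for $A[h]$ and one for $K[h]$). Theorem~\ref{comp_complexity} directly bounds $\mathbb{E}[\tau] \le 1+\ln(a/w)$, which handles the $c\tau$ term.

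The heart of the proof is bounding $\mathbb{E}[L_i]$ conditional on reaching the $i$-th iteration. The inner loop starts at a uniformly random $h \in \{0,\ldots,A[b]-1\}$ and walks $h \gets K[h]$ until $A[h] < A[b]$. Since $K[h']$ is by construction the bucket that replaced $h'$ in $W$ at the moment $h'$ was removed, this traversal mirrors exactly what \textsc{GetBucket} would perform on the ``frozen'' state of the system just after $b$'s removal, when the working set was $\mathcal{W}_b$ of size $A[b]$. A small lemma is needed to conclude $A[b] \ge w$: because \textsc{AddBucket} pops $\mathcal{R}$ in LIFO order, every bucket currently above $b$ in the stack was removed from $\mathcal{W}_b$ \emph{after} $b$, so $\mathcal{W} \subseteq \mathcal{W}_b$ and hence $A[b] = |\mathcal{W}_b| \ge w$. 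Applying Theorem~\ref{comp_complexity} to this frozen state then yields $\mathbb{E}[L_i \mid \tau \ge i] \le 1 + \ln(a/A[b]) \le 1 + \ln(a/w)$.

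To combine, I would write $\mathbb{E}[\xi] \le c\,\mathbb{E}[\tau] + 2 \sum_{i \ge 1} \mathbb{E}\!\left[L_i\, \mathbf{1}\{\tau \ge i\}\right] \le O(1+\ln(a/w)) \cdot \mathbb{E}[\tau] = O\bigl((1+\ln(a/w))^2\bigr)$, using linearity of expectation. The main obstacle will be rigorously justifying the ``frozen-state equivalence'': the randomness of the successor array $K$ is not a priori independent of the outer-loop trajectory through the same key $k$, and the random-removals assumption is precisely what I would exploit --- together with the independence across distinct subsets guaranteed by the uniform hashing assumption --- to argue that the inner walk's distribution coincides with that analyzed in Theorem~\ref{comp_complexity}.
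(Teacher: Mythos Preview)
Your decomposition $\xi = c\tau + 2\sum_{i=1}^{\tau} L_i$ is sound and is a cleaner starting point than the paper's recursion on $|\mathcal{R}|$. The gap is in the step you yourself flag as the main obstacle: the ``frozen-state equivalence'' is not merely hard to justify --- it is false. The inner traversal $h \gets K[h]$ does \emph{not} mirror \textsc{GetBucket} on any frozen state. \textsc{GetBucket} rehashes at every step, so each transition is uniform over the relevant $\mathcal{W}_{b'}$ by the uniform-hashing assumption; the successor map $K$ is a \emph{deterministic} function of the removal history, and $K[h]$ is far from uniform over $\mathcal{W}_h$ even under random removals. Concretely, for the very first removed bucket $h$ we always have $K[h]=a{-}1$, regardless of which bucket $h$ is. So Theorem~\ref{comp_complexity} does not transfer, and no amount of independence between hash seeds will fix this, because the inner loop uses no hashes at all.

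What does work --- and what the paper exploits --- is to reinterpret the chain $h, K[h], K[K[h]],\ldots$ as the time-ordered sequence of buckets occupying \emph{position} $h$ in the array $W$; hence $L_i$ equals the number of times position $h_b(k)$ was overwritten prior to $b$'s removal. Under random removals, position $h$ is overwritten at the $j$-th removal with probability $1/(a-j+1)$, so $\mathbb{E}[L_i]\le \sum_{j} 1/(a-j+1)\le \ln(a/w)$. With this replacement your combining step $\mathbb{E}[\xi]\le O(1+\ln(a/w))\cdot\mathbb{E}[\tau]$ goes through. The paper reaches the same bound by a different route: it builds a recursion in $r=|\mathcal{R}|$, charging each new removal an incremental expected cost of $\tfrac{1}{a-r}(1+\ln(a/w))$ and summing. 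Your decomposition is arguably more transparent once the correct inner-loop bound is in place, but you must obtain that bound via the overwrite-count argument, not via Theorem~\ref{comp_complexity}.
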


%%%%%%%%%%%%%%%%%%%%%%%%%%%%%%%%%%%%%%%%%%%%%%%%%%%%%%%%%%%%%%
%%%%%%%%%%%%%%%%%%%%%%%%%%%%%%%%%%%%%%%%%%%%%%%%%%%%%%%%%%%%%%
% here is the full proof
\begin{proof}
Denote by $\mathcal{R}$ the (random) sequence of bucket removals. We will prove the result recursively on the size of $\mathcal{R}$. 
Suppose $|\mathcal{R}|=r$ buckets were randomly removed, and now we randomly remove an additional bucket. 

By the minimal disruption property, only the keys that were mapped to this newly removed bucket are remapped. Likewise, by the balance property, this occurs with probability $\frac{1}{a-r}$ for a randomly chosen key. The keys that are remapped require an additional access to the array $A$ and possibly the \emph{resolution} of the bucket's identity using the array \mbox{K}, where the latter depends on the sequence of removals and the last index the key hits. 

The length of the required resolution when hitting index $i$ is upper-bounded by the number of times the bucket associated with index $i$ was removed. Denote this quantity by $\Delta_i^r$.Since the remapped keys have an equal probability of hitting any index in $\set{0,1,\ldots,a-r-2}$, we obtain
\begin{align}\label{eq:xi_ub_exp}
\mathbbm{E}[\xi &\mid |\mathcal{R}|=r+1]\leq\mathbbm{E}[\xi \mid |\mathcal{R}|=r]\cr
&+ \cdot \frac{1}{a-r}\Big(1+\frac{1}{a-r-1}\sum_{i=0}^{a-r-2}\mathbbm{E}[\Delta_i^{r+1}]\Big),   
\end{align}
Now, we observe that since index $0$ is always associated with a working bucket, $\Delta_0^{r+1}$ is stochastically larger than $\Delta_i^{r+1}$ for all $i$. Also,
\begin{equation}\label{eq:delta_ub}
\begin{split}
\mathbbm{E}[\Delta_0^{r+1}]&=\frac{1}{a}+\ldots +\frac{1}{a-r-1}\cr
&=\sum_{k=1}^{r+1}\frac{1}{a-k}\leq \ln{\Big(\frac{a}{w}\Big)}.  
\end{split}
\end{equation}
Thus, using \eqref{eq:delta_ub} in \eqref{eq:xi_ub_exp} yields the following recurrence,
\begin{equation}\label{eq:rec_r}
\begin{split}
&\mathbbm{E}[\xi \mid |\mathcal{R}|=r+1] \leq\cr&\mathbbm{E}[\xi \mid |\mathcal{R}|=r]
+\frac{1}{a-r}\Big(1+\ln{\Big(\frac{a}{w}\Big)}\Big), 
\end{split}
\end{equation}
with the initial condition given by 
\begin{equation}\label{eq:rec_initial_cond}
\mathbbm{E}[\xi \mid |\mathcal{R}|=0]=1.
\end{equation}
Now, by solving the recurrence given by \eqref{eq:rec_r} and \eqref{eq:rec_initial_cond} we obtain
\begin{align}
\mathbbm{E}[\xi \mid |\mathcal{R}|=a-w]&\leq 1+\ln{\Big(\frac{a}{w}\Big)}+\ln^2{\Big(\frac{a}{w}\Big)}\cr
&\leq   \Big(1+\ln{\Big(\frac{a}{w}\Big)}\Big)^2,   
\end{align}
which concludes the proof.
\end{proof}

%%%%%%%%%%%%%%%%%%%%%%%%%%%%%%%%%%%%%%%%%%%%%%%%%%%%%%%%%%%%%%
%%%%%%%%%%%%%%%%%%%%%%%%%%%%%%%%%%%%%%%%%%%%%%%%%%%%%%%%%%%%%%

Finally, Table \ref{tab:anchor_evo} summarizes the differences between the naive, reduced-memory, and the final minimal-memory implementations.

%%%%%%%%%%%%%%%%%%%%%%%%%%%%%%%%%%%%%%%%%%%%%%%%%%%%%%%%%%%%%%
\begin{table}[t]
\centering
\begin{tabular}{@{}lC{2.0cm}C{2.2cm}C{1.2cm}@{}C{1.2cm}@{}} 
\toprule
        & \textbf{Hash operations}  & \textbf{Memory accesses}         & \textbf{Memory}             & \textbf{Update}      \\  
\midrule
Naive      &    $O(1+\log{(\frac{a}{w})})$    & $O(1+\log{(\frac{a}{w})})$    &  $\Theta(a+ar)$   &   $\Theta(w)$  \\ 
Reduced    &    $O(1+\log{(\frac{a}{w})})$    & $O(1+\log{(\frac{a}{w})})$    &  $O(a+r^2)$  &   $O(w)$  \\ 
Minimal    &    $O(1+\log{(\frac{a}{w})})$    & $O((1+\log{(\frac{a}{w})})^2)$    &  $\Theta(a)$      &   $O(1)$    \\ 
\bottomrule 
\end{tabular}
\caption{\name implementation evolution: naive, then  reduced-memory, then minimal-memory implementations. Successive implementations reduce the memory footprint to improve scalability, but the final implementation also slightly increases the upper bound on the expected number of memory accesses.}
\label{tab:anchor_evo}
\end{table}
%%%%%%%%%%%%%%%%%%%%%%%%%%%%%%%%%%%%%%%%%%%%%%%%%%%%%%%%%%%%%%%%%%%%

%=========================================================================
%  Evaluation
%=========================================================================

%%%%%%%%%%%%%%%%%%%%%%%%%%%%%%%%%%%%%%%%%%%%%%%%%%%%%%%%%%%%%%%
%%%%%%%%%%%%%%%%%%%%%%%%%%%%%%%%%%%%%%%%%%%%%%%%%%%%%%%%%%%%%%%
%%% AnchorHash SPEED
\begin{figure}[!t]
\centering

\begin{subfigure}[c]{\linewidth} \centering
%\MyIncludeGraphics[clip, trim=0 6cm 0 6cm, width=0.99\textwidth]{figs/speed_1000_ratios.pdf}
\MyIncludeGraphics[clip, trim=0.1cm 1.1cm 0.1cm 0.1cm, width=0.99\textwidth]{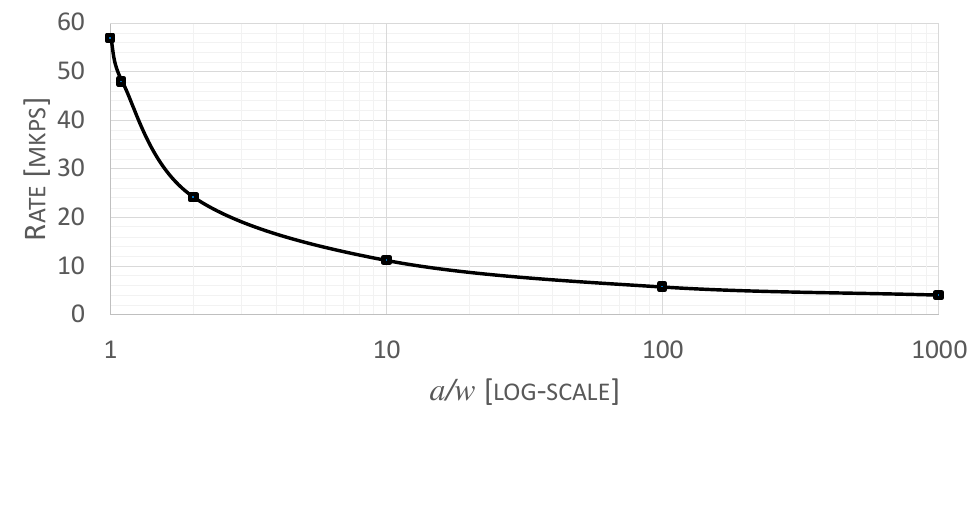}
\caption{\footnotesize \name key lookup rate with 1,000 working buckets with respect to different $a/w$ ratios. 
}
\label{fig:anchor_speed_1000}
\end{subfigure}

\begin{subfigure}[c]{\linewidth} \centering
%\MyIncludeGraphics[clip, trim=0 10.3cm 0 10.5cm, width=0.99\linewidth]{figs/fix_3.pdf}
\MyIncludeGraphics[clip, trim=0.1cm 0.1cm 0.1cm 0.1cm,width=0.99\linewidth]{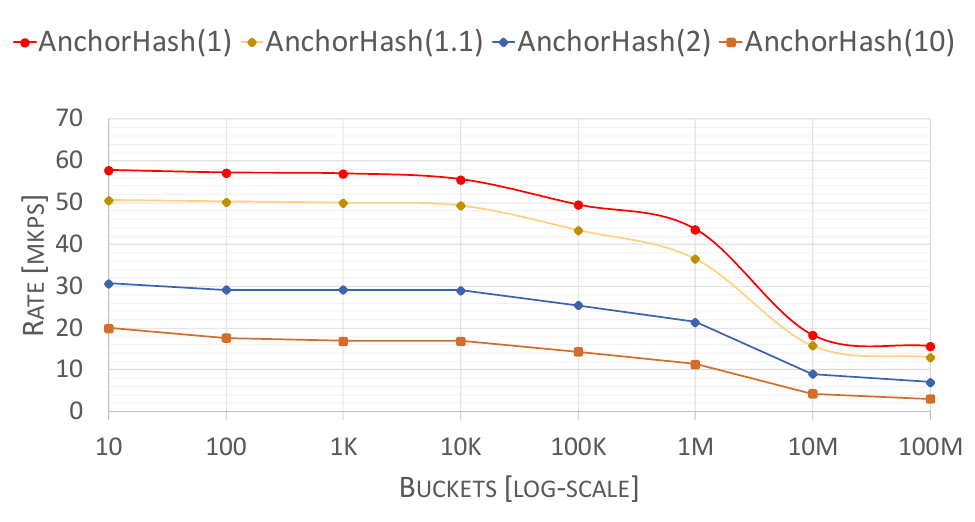}
\caption{\footnotesize \name key lookup rate with respect to the number of working buckets for different fixed $\frac{a}{w}$ ratios. \emph{AnchorHash(x)} stands for an \name instance with $\frac{a}{w}=x$. 
}
\label{fig:anchor_speed_random}
\end{subfigure}

\caption{\name key lookup rate in millions of keys per second (Mkps). 
}
\label{fig:anchor_speed}
\end{figure}
%%%%%%%%%%%%%%%%%%%%%%%%%%%%%%%%%%%%%%%%%%%%%%%%%%%%%%%%%%%%%%%
%%%%%%%%%%%%%%%%%%%%%%%%%%%%%%%%%%%%%%%%%%%%%%%%%%%%%%%%%%%%%%%

%%%%%%%%%%%%%%%%%%%%%%%%%%%%%%%%%%%%%%%%%%%%%%%%%%%%%%%%%%%%%%%
%%%%%%%%%%%%%%%%%%%%%%%%%%%%%%%%%%%%%%%%%%%%%%%%%%%%%%%%%%%%%%%

%%% Speed
\begin{figure*}[h]
\centering
%\MyIncludeGraphics[clip, trim=0 9.9cm 0 10.5cm, width=0.8\linewidth]{figs/fixed_1.pdf}
\MyIncludeGraphics[clip, trim=0.1cm 0.1cm 0.1cm 0.1cm, width=\linewidth]{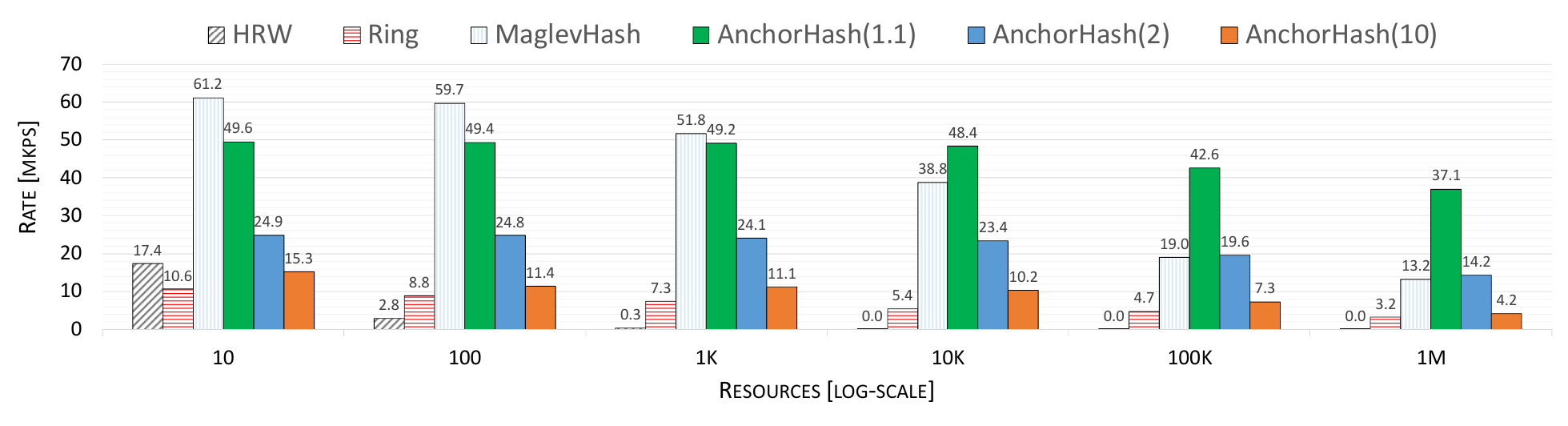}
\caption{\footnotesize Comparing the key lookup rates between HRW, Ring, MaglevHash and \name for different resource counts. Due to the significantly smaller memory footprint, \name maintains an extremely high rate even for $10^5$ resources.}
\label{fig:speed_comp}
\end{figure*}

\begin{figure}[h]
\centering
%\MyIncludeGraphics[width=\linewidth]{figs/revision_fig_new_ccdf.pdf}
\MyIncludeGraphics[clip, trim=0.1cm 0.1cm 0.1cm 0.1cm,width=\linewidth]{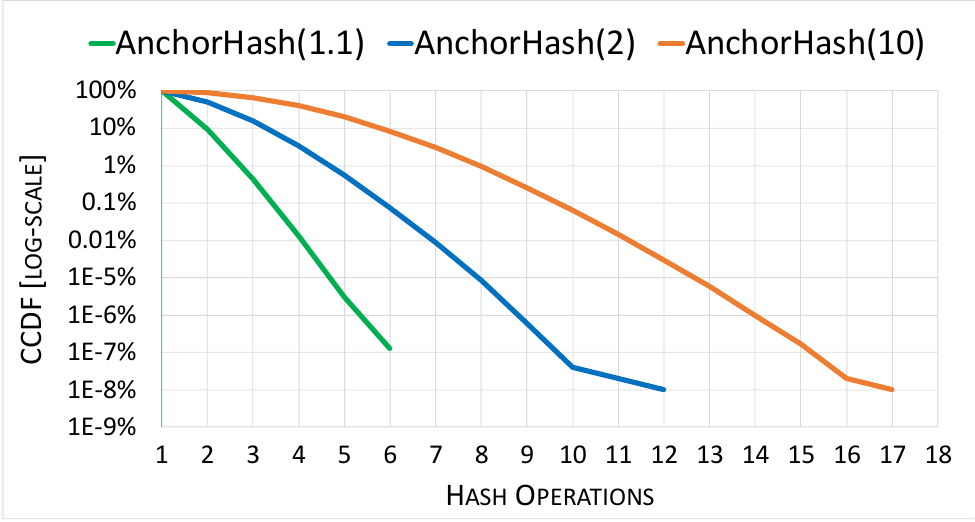}
\caption{\footnotesize CCDF for the number of hash operations performed by \name per key lookup. For example, for AnchorHash$(2)$, $99.9\%$ of keys would require $6$ or less hash operations.}
\label{fig:anchor_key_lookup_hashes}
\end{figure}

\begin{figure*}[h]
\centering
%\MyIncludeGraphics[width=\linewidth]{figs/oversub_new.pdf}
\MyIncludeGraphics[clip, trim=0.1cm 0.1cm 0.1cm 0.1cm,width=\linewidth]{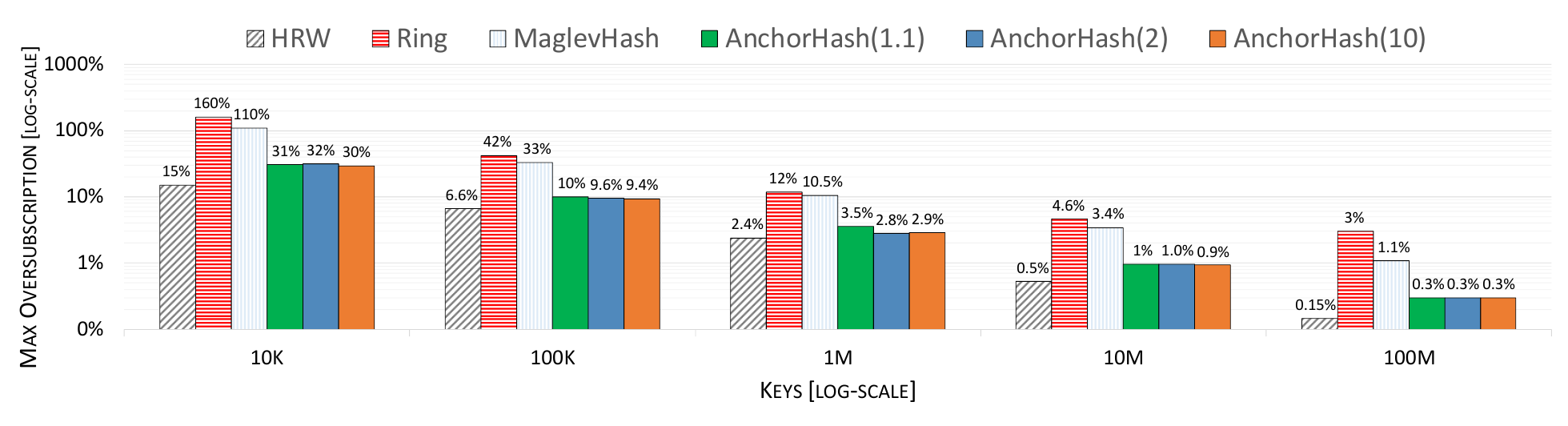}
\caption{
\footnotesize Comparing worst-case oversubscription. Lower is better (better balance). All instances have 1,000 resources. For \name we have an Anchor of 1,100, 2000 and 10000 buckets with 100, 1000 and 9000 random removals accordingly (\ie AnchorHash(1.1), AnchorHash(2) and AnchorHash(10)). 
}
\label{fig:balance_test}
\end{figure*}

\begin{figure*}[h]
\centering
%\MyIncludeGraphics[clip, trim=0 10cm 0 10.5cm, width=0.99\linewidth]{figs/responce_time.pdf}
%\MyIncludeGraphics[width=\linewidth]{figs/revision_new_change_time.pdf}
\MyIncludeGraphics[clip, trim=0.1cm 0.1cm 0.1cm 0.1cm, width=\linewidth]{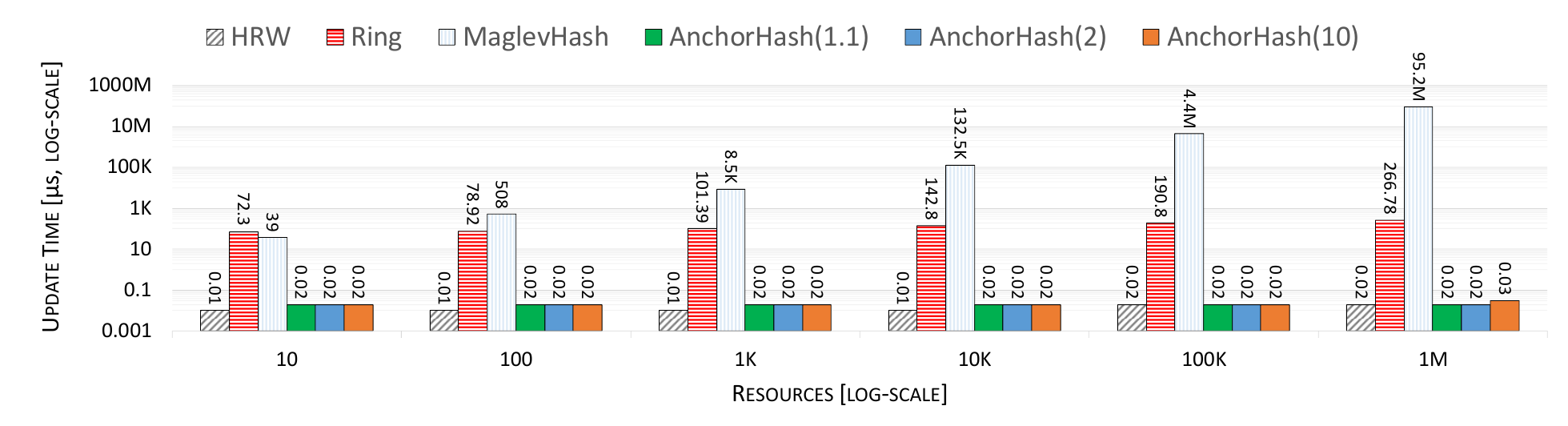}
\caption{%
\footnotesize Comparing update time for resource removals/additions.  HRW and \name require only a few tens of nano-seconds independently of the size of the system. For Ring and especially MaglevHash, the update time increases with the size of the system. For example, for $10^5$ resources, MaglevHash requires more than 4 seconds to repopulate its array.}
\label{fig:rt_test}
\end{figure*}

%%%%%%%%%%%%%%%%%%%%%%%%%%%%%%%%%%%%%%%%%%%%%%%%%%%%%%%%%%%%%%%
%%%%%%%%%%%%%%%%%%%%%%%%%%%%%%%%%%%%%%%%%%%%%%%%%%%%%%%%%%%%%%%

\section{Evaluation} \label{sec:eval}

\T{Algorithms.} In this section we test and compare \name to  HRW, Ring, and  
MaglevHash, according to the evaluation metrics of Table \ref{tab:main_comp}: consistency (\ie minimal disruption and balance), key lookup rate, memory footprint, and update time upon additions and removals.

\T{Testbed.} All our experiments were conducted on a single core of a commodity machine with an Intel i7-7000 CPU at 3.6 GHz, 16~GB of RAM and an Ubuntu 16.04 LTS operating system. All algorithm implementations are in C++ and are optimized for run-time purposes. 
In our evaluation, each bucket has a 32-bit identifier (\ie up to $2^{32}$ buckets are supported),  and we use 64-bit randomly-generated keys. For all algorithms we use the crc32 \cite{singhal2008inside} hash function with two 64-bit inputs (key and seed) for uniform hashing.

\T{Memory footprint.} Before turning to empirical evaluation, we first discuss the memory footprint of the four approaches, as it has a significant impact on all other qualities such as key lookup rate and update time.

The memory footprint of Ring and MaglevHash depends on the theoretical hash-space balance guarantee these algorithms provide. For example, in MaglevHash, reaching a maximum of 1\% hash space imbalance requires \emph{at least} $\frac{1}{0.01}=100$ copies for each resource. Throughout our evaluation, for MaglevHash and Ring we use $100$ copies for each resource \cite{maglev}. On the other hand, HRW and \name provide perfect hash-space balance and do not require copies to do so. 

In our implementation, \name requires only 16 Bytes of memory per resource. This means that even for $10^6$ resources, \name uses 16 MB of space, whereas MaglevHash requires at least 400 MB to achieve a reasonable balance for the same scenario.

\T{Lookup rate.} We test 
\name's key lookup rate for different Anchor sizes (up to $10^8$) and different $\frac{a}{w}$ ratios (up to $10^3$). For example,  $w=1,000$ and $\frac{a}{w}=100$ means that only $w=1,000$ resources are still active out of $a=100,000$ (\ie a scenario with 99,000 random removals).

The results  are depicted in \fig{fig:anchor_speed}. \fig{fig:anchor_speed_1000} shows the key lookup rate achieved by \name with 1,000 working buckets with respect to different $\frac{a}{w}$ ratios. \fig{fig:anchor_speed_random} depicts \name rate with respect to the number of working buckets for different fixed $\frac{a}{w}$ ratios. Note that, even for a fixed $\frac{a}{w}$ ratio, the rate slightly decreases as the number of buckets increases. This is because of the increased percentage of L3 cache misses as follows from the increased memory footprint.
Remarkably, even for a million buckets, \name achieves a rate of tens of millions of keys per second for reasonable and even extreme operating points (\eg half of the buckets have been randomly removed).

%%%%%%%%%%%%%%%%%%%%%%%%%%%%%%%%%%%%%%%%%%%%%%%%%%%%%%%%%%%%%%%
%%%%%%%%%%%%%%%%%%%%%%%%%%%%%%%%%%%%%%%%%%%%%%%%%%%%%%%%%%%%%%%

%%%%%%%%%%%%%%%%%%%%%%%%%%%%%%%%%%%%%%%%%%%%%%%%%%%%%%%%%%%%%%%
%%%%%%%%%%%%%%%%%%%%%%%%%%%%%%%%%%%%%%%%%%%%%%%%%%%%%%%%%%%%%%%

Next, Fig.~\ref{fig:speed_comp} compares the key lookup rates achieved by the four approaches for different number of resources. 
For \name, we depict three scenarios with 10\%, 50\% and 90\% random removals, corresponding to AnchorHash(1.1), AnchorHash(2) and AnchorHash(10). 
AnchorHash(1.1) reaches a high key lookup rate that is similar to MaglevHash.
As the resource count increases, MaglevHash suffers from a more significant rate degradation due to increased L3 cache misses that stem from its much larger memory footprint. On the other hand, as expected, the rate of \name decreases for higher percentages of random removals, due to the larger number of hash computations.

Additionally, we tested the lookup rate of the four approaches using a backbone router CAIDA trace \cite{CAIDACH16}. 
The results follow similar trends. Interestingly, all approaches run faster since the often reoccurring flow packets increase the cache hit rate. 

We also measured the number of hash operations for a key lookup of \name with 1,000 resources with an Anchor of 1,100, 2000 and 10000 buckets with 100, 1000 and 9000 random removals respectively (\ie AnchorHash(1.1), AnchorHash(2) and AnchorHash(10)). The number of simulated keys is $10^8$. The results are depicted in \fig{fig:anchor_key_lookup_hashes}. For all versions of AnchorHash, it is evident that the number of hash operations is exponentially decreasing. Moreover, while the worst case in terms of hash operations is 101, 1001 and 9001 hash operations for the three versions of AnchorHash, out of $10^8$ keys no key required more than 6, 12 and 17 hash operations respectively. In AnchorHash(1.1), more than 90\% of keys terminate after a single hash operation and less than 0.5\% require more than 2. Even for AnchorHash(10), 99\% terminate with less than 7 operations. 

\T{Balance.} Essentially, there are three sources of imbalance, all reflected in an algorithm's load-balancing abilities: 
(1)~hash space imbalance; (2)~quality of the hash function; and (3)~arriving keys. While the last two are implementation- and workload-dependent, the first  is algorithm-dependent. Thus, in terms of balance, assuming uniform hashing, HRW and \name have an inherent advantage over MaglevHash and Ring. 
To demonstrate this, we tested the four approaches using the same hash function and a random stream of keys. 

By standard practice \cite{maglev} we measure the worst-case resource oversubscription in \%. For instance, an oversubscription of 10\% means that the most loaded resource has 10\% more load than the average. All instances run with 1,000 resources. For \name we have an Anchor of 1,100, 2000 and 10000 buckets (corresponding to 100, 1000 and 9000 random removals respectively, \ie AnchorHash(1.1), AnchorHash(2) and AnchorHash(10)). Ring and MaglevHash both run with 100 copies per resource. The results are depicted in \fig{fig:balance_test}. 
As expected, the oversubscription improves for all algorithms as the number of keys increases. The oversubscription of MaglevHash and Ring are theoretically lower-bounded. Specifically, for MaglevHash it is at least 1.01 with 100 copies per resource and for Ring it is the inherent imbalance created by different size intervals in the ring (with high probability). Since HRW and AnchorHash (for all its versions) theoretically provide perfect balance, by the Law of Large Numbers, the oversubscription approaches zero as the number of keys increases. All three versions of AnchorHash have almost the same oversubscription indicating that the size of the anchor has no effect on the resulting balance (as expected). Note that HRW converges to 0 slightly faster than AnchorHash due to the large number of hash operations performed for each key, leading to better randomization.

\T{Update time.} We next test for the time it takes to update the data structure of each of the algorithms with a newly added or removed resource. The results are averaged over 100 trials, and depicted in \fig{fig:rt_test}. 
Both HRW and \name respond in nanosecond scale nearly independently of the size of the system. On the other hand, Ring and MaglevHash respond slower as the system size increases. For example, with $10^5$ resources, MaglevHash requires more than 4 seconds to respond.

%%%%%%%%%%%%%%%%%%%%%%%%%%%%%%%%%%%%%%%%%%%%%%%%%%%%%%%%%%%%%%%
%%%%%%%%%%%%%%%%%%%%%%%%%%%%%%%%%%%%%%%%%%%%%%%%%%%%%%%%%%%%%%%

%%%%%%%%%%%%%%%%%%%%%%%%%%%%%%%%%%%%%%%%%%%%%%%%%%%%%%%%%%%%%%%
%%%%%%%%%%%%%%%%%%%%%%%%%%%%%%%%%%%%%%%%%%%%%%%%%%%%%%%%%%%%%%%

\T{Minimal disruption.} We also test the minimal-disruption property for all approaches. Following theory, HRW, Ring and \name achieve the minimal-disruption property in practice as well. Unfortunately, MaglevHash fails to achieve minimal disruption and therefore is not fully consistent. For example, in a scenario with 900 resources and 100 consecutive resource additions, we find that \emph{at each resource addition}, MaglevHash wrongfully reassigns a near-constant fraction of $\approx$ 0.6\% of the hash space, \ie $\approx$ 0.6\% of the keys are needlessly remapped at each of the 100 resource additions. While such \emph{flips} may be acceptable when used together with key tracking (\eg connection tracking in datacenter load-balancing), they may not be acceptable in other systems such as cache servers. 

%=========================================================================
%  Conclusion
%=========================================================================

\section{Conclusion}

In this paper we introduced \name, a new consistent hashing technique. We provided implementation details and theoretical guarantees for \name. We then conducted evaluations comparing \name to existing algorithms. Evaluation results indicate that \name is a scalable and a fully-consistent hashing technique. It is capable of handling millions of resources while maintaining high key lookup rate, low memory footprint, and small update times upon resource additions and removals. Finally, the code for AnchorHash appears in \cite{anchorcode}.

\section{Future Work}
Unlike other approaches, \name leverages state information to achieve its properties. Thus, in a distributed environment (\ie where multiple dispatchers run \name in parallel), \name requires an agreement on the removal order (\ie the content of $\mathcal{R}$) to ensure full consistency. While this overhead is small in terms of communication overhead (happens only once upon removal), it is of interest to study whether \name can be extended to maintain full consistency in a setting in which the dispatchers do not necessarily agree on the order of removals.

\section*{Acknowledgments}

This work was partly supported by the Hasso Plattner Institute Research School, the Israel Science Foundation (grant No. 1119/19),  the Technion Hiroshi Fujiwara Cyber Security Research Center, and the Israel Cyber Bureau.

%=========================================================================
%  references
%=========================================================================

%\balance

\bibliographystyle{IEEEtran}
\bibliography{bib}

%=========================================================================
%  Bio
%=========================================================================

%%% 
%\vfill
\vspace{3cm}
\enlargethispage{-3cm}
\begin{IEEEbiography}[{\includegraphics[width=1in,height=1.25in,clip,keepaspectratio]{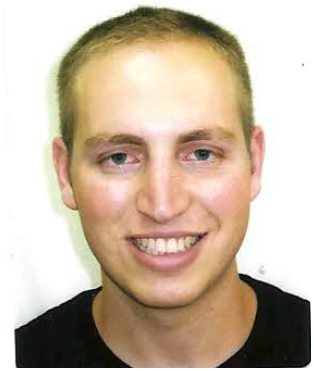}}]%
	{Gal Mendelson} received his BSc, MSc (summa cum laude) and Ph.D. degrees from the Viterbi department of Electrical Engineering, Technion -– Israel Institute of Technology, in 2009, 2015 and 2020, respectively. He was the recipient of the Hasso Plattner Institute Ph.D. fellowship award and the INFORMS Applied Probability Society best student paper award. He is mainly interested in stochastic analysis, algorithms and communication networks. 
\end{IEEEbiography}
%\vspace{-12mm}
\begin{IEEEbiography}[{\includegraphics[width=1in,height=1.25in,clip,keepaspectratio]{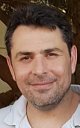}}]%
	{Shay Vargaftik} received his B.Sc. and Ph.D. degrees from the Viterbi department of Electrical Engineering, Technion -– Israel Institute of Technology, in 2012 and 2019, respectively. He was the recipient of the Hasso Plattner Institute and the IBM Ph.D. fellowship awards. He is currently a postdoctoral researcher in the VMware Research Group (VRG). He is mainly interested in the theory and practice of networking and machine learning with an emphasis on scalability and efficient resource usage.
\end{IEEEbiography}
%\vspace{-12mm}
\begin{IEEEbiography}[{\includegraphics[width=1in,height=1.25in,clip,keepaspectratio]{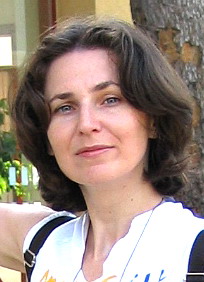}}]%
	{Katherine Barabash} received her B.Sc. in Applied Mathematics and
M.Sc. in Computer Science degrees from the Technion -- Israel Institute of Technology, in 1994 and 2010 respectively. Kathy is a researcher in IBM Research since 1997 and has contributed to system research in areas of memory management, storage, software defined networking, as well as other data center and Cloud technologies. Kathy's current research is devoted to Hybrid Cloud and 5G networking.  
\end{IEEEbiography}
\enlargethispage{-3cm}
%\vspace{-12mm}
\begin{IEEEbiography}[{\includegraphics[width=1in,height=1.25in,clip,keepaspectratio]{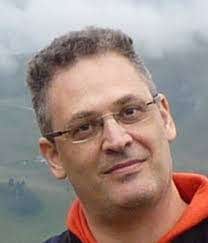}}]%
	{Dean H. Lorenz} received his B.Sc. (summa com laude) in Computer Engineering and Ph.D. in Electrical Engineering, from the Technion, Haifa, Israel. 
	He is Researcher at IBM Research -- Haifa, where he is a technical leader in the Cloud Architecture Networking group, in the Hybrid Cloud department. 
	Dr. Lorenz has more than 20 years of experience in research, hands-on development, and innovation in Networking, Virtualization, Storage, and Mobile Technologies; and has held technical positions at leading companies in these industries, including IBM Research, Akamai, Adobe Omniture, and Qualcomm. His current research is Cloud technologies, with focus on Cloud networking, AIOps, elasticity, and operation efficiency.
\end{IEEEbiography}
%\vspace{-12mm}
\begin{IEEEbiography}[{\includegraphics[width=1in,height=1.25in,clip,keepaspectratio]{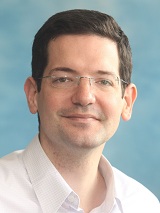}}]%
	{Isaac Keslassy} (M{'}02, SM{'}11) received his M.S. and Ph.D. degrees in Electrical Engineering from Stanford University, Stanford, CA, in 2000 and 2004, respectively. He is currently a full professor in the Viterbi department of Electrical Engineering at the Technion, Israel. His recent research interests include the design and analysis of data-center networks and high-performance routers.  He was the recipient of an ACM SIGCOMM test-of-time award, of an ERC Starting Grant, and of the Allon, Mani, Yanai, and Taub awards. He was associate editor for the IEEE/ACM Transactions on Networking.
\end{IEEEbiography}
%\vspace{-12mm}
\begin{IEEEbiography}[{\includegraphics[width=1in,height=1.25in,clip,keepaspectratio]{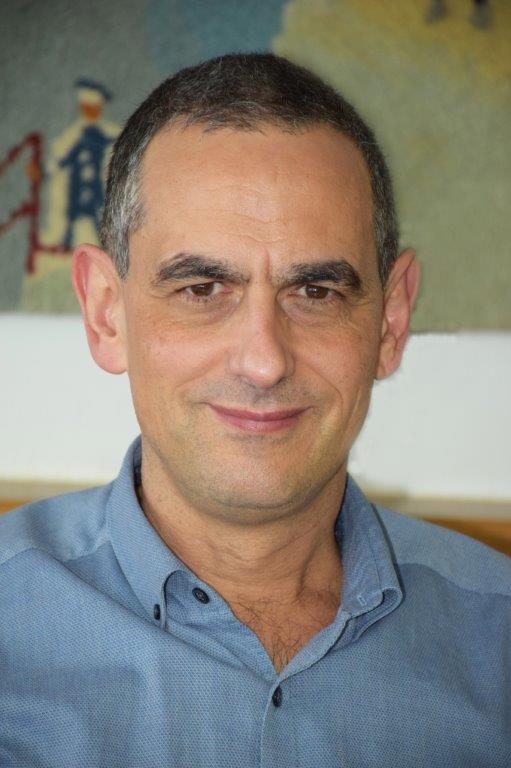}}]%
	{Ariel Orda} (S{'}84, M{'}92, SM{'}97, F{'}06) received the BSc (summa cum laude), MSc, and DSc degrees in electrical engineering from the Technion, Haifa, Israel, in 1983, 1985, and 1991, respectively. During 1.1.2014-31.12.2017, he was  the dean of the Viterbi Department of Electrical Engineering, Technion. Since 1994, he has been with the Department of Electrical Engineering, Technion, where he is the Herman and Gertrude Gross professor of communications. His research interests include network routing, the application of game theory to computer networking, survivability, QoS provisioning, wireless networks, and network pricing. He served as program co-chair of IEEE INFOCOM 2002, WiOpt 2010 and Netgcoop 2020, and general chair of Netgcoop 2012. He was an editor of the IEEE/ACM Transactions on Networking and Computer Networks. He received several awards for research, teaching, and service.
\end{IEEEbiography}
\end{document}

%=========================================================================
%  End of document
%=========================================================================